\newcommand{\tr}{{\rm Tr}}
\begin{document}

\begin{frontmatter}

\title{Harvest and Jam: Optimal Self-Sustainable Jamming Attacks against Remote State Estimation
}


\author[HKUST]{Yuxing Zhong}
\ead{yuxing.zhong@connect.ust.hk, yuxing.zhong@sydney.edu.au}\thanks{Yuxing Zhong was with the Department of Electronic and Computer Engineering, the Hong Kong University of Science and Technology, Hong Kong, when this work was conducted, and is now with the School of Electrical and Computer Engineering, the University of Sydney, Australia.},    
\author[NU]{Yuzhe Li}\ead{yuzheli@mail.neu.edu},  
\author[USyd]{Daniel E.~Quevedo}\ead{daniel.quevedo@sydney.edu.au},  
\author[HKUST,Bio]{Ling Shi}\ead{eesling@ust.hk}  

\address[HKUST]{Department of Electronic and Computer Engineering, the Hong Kong University of Science and Technology,\\ Clear Water Bay, Kowloon, Hong Kong}
\address[NU]{State Key Laboratory of Synthetical Automation for Process Industries, Northeastern University, Shenyang 110819, China} 
\address[USyd]{School of Electrical and Computer Engineering, the University of Sydney, Sydney, Australia} 
\address[Bio]{Department of Chemical and Biological Engineering, Hong Kong University of Science and Technology,\\ Clear Water Bay, Kowloon, Hong Kong}  
          
\begin{keyword}                           
Attack, energy harvesting, Markov decision process, power control, and state estimation.          
\end{keyword}                             

\begin{abstract}                          
This paper considers the optimal power allocation of a jamming attacker against remote state estimation. The attacker is self-sustainable and can harvest energy from the environment to launch attacks. The objective is to carefully allocate its attack power to maximize the estimation error at the fusion center. Regarding the attacker's knowledge of the system, two cases are discussed: (i) perfect channel knowledge and (ii) unknown channel model. For both cases, we formulate the problem as a Markov decision process (MDP) and prove the existence of an optimal deterministic and stationary policy. Moreover, for both cases, we develop algorithms to compute the allocation policy and demonstrate that the proposed algorithms for both cases converge to the optimal policy as time goes to infinity. Additionally, the optimal policy exhibits certain structural properties that can be leveraged to accelerate both algorithms. Numerical examples are given to illustrate the main results.
\end{abstract}
\end{frontmatter}

\section{Introduction}\label{chap:intro}
The rapid development of the Internet of Things has led to the implementation of cyber-physical systems (CPSs) across various sectors such as smart grids~\cite{karnouskos2011cyber} and health care~\cite{gunes2014survey}. Nevertheless, the utilization of CPSs presents a dual nature. While enhancing system flexibility, it also makes them vulnerable to malicious attacks~\cite{pasqualetti2015control}. For instance, the power outage in western Ukraine in late 2015 was reported to be caused by a false data injection attack~\cite{foxbrewster2016ukraine}. Generally speaking, attacks can be classified into three categories: deception~\cite{guo2016optimal}, disclosure~\cite{leong2018transmission1}, and disruption~\cite{7054460}. For a comprehensive overview of these attack types, readers are referred to~\cite{sanchez2019bibliographical} and the references therein. This paper focuses on disruption attacks, specifically jamming attacks, where an attacker deliberately interferes with the communication channel. Such attacks result in packet loss, thereby compromising system stability and degrading estimation accuracy.

In systems with a single dynamic process, prior work has addressed the derivation of the optimal attack strategy for the single-sensor case~\cite{7054460} and has extended this analysis to signal-to-interference-plus-noise ratio (SINR)-based networks~\cite{qin2020optimal}. Subsequent studies have generalized these results to systems with multiple sensors~\cite{wang2024optimal}. For systems involving multiple processes, research has investigated optimal attack allocation~\cite{peng2017optimal} and power allocation~\cite{ren2018attack} strategies, as well as attack power design using game-theoretic frameworks~\cite{li2016sinr}. However, the aforementioned works rely on the strong assumption that the attacker possesses complete knowledge of the system. To address this limitation, more recent efforts have considered scenarios where attackers operate with incomplete system information and have utilized reinforcement learning techniques to design effective attacks~\cite{huang2022learning}.

Despite the above extensive body of work, one key limitation remains: most studies assume that attackers operate under strict energy constraints. Recent advancements in energy-harvesting technologies, which enable devices to harvest energy from environmental sources such as solar power and electromagnetic radiation~\cite{ho2012optimal,li2016power}, have the potential to alleviate these constraints significantly. In fact, attackers equipped with energy harvesters can continuously replenish their energy reserves, enabling sustained and potentially more devastating jamming attacks. Surprisingly, the implications of such self-sustaining ``harvest-and-jam" attacks remain unexplored in the existing literature. This gap motivates the present study, which investigates the impact of the ``harvest-and-jam" attacks on system performance. It turns out that modeling energy harvesting as a stochastic process~\cite{ho2012optimal} introduces significant analytical challenges compared to conventional attack frameworks. In summary, the main contributions of this paper are:\\
(1) \underline{\it Energy Havester}: Unlike prior studies where the attacker operates solely on its own energy reserves~\cite{peng2017optimal,ren2018attack,li2016sinr,huang2022learning}, we study a novel scenario where the attacker can harvest energy from the environment. \\
(2) \underline{\it MDP Formulation}: We address the problem using MDP and prove the existence of a deterministic and stationary optimal policy ({\bf Theorem~\ref{thm:stationary}}). {It is important to note that, since the reward function in our setting is not upper bounded, standard results~\cite[Theorem 5.5.4 and Assumption 4.2.1]{hernandez2012discrete} are not directly applicable. Consequently, proving the existence of such a policy becomes significantly more involved~\cite{guo2006average}. This stands in stark contrast to the energy-harvesting sensor case~\cite{leong2018optimal, leong2018transmission}, where standard results can be applied.}\\
(3) \underline{\it Channel Information}: Unlike the previous studies that assume the attacker enjoys complete system information~\cite{peng2017optimal,ren2018attack,li2016sinr}, we consider two different scenarios: (i) perfect channel knowledge and (ii) unknown channel models. This broadens the applicability of our results.\\
(4) \underline{\it  Structural Results}: We design efficient algorithms {with guaranteed convergence} to compute optimal attack policies for both perfect and unknown channel scenarios. We further establish several key structural properties of the optimal policy ({\bf Lemma~\ref{lemma:monototic q}}, {\bf Lemma~\ref{lemma:super}} and {\bf Theorem~\ref{thm:structure}}), which can be utilized to accelerate computation. While developing algorithms to compute optimal policies is common, our work uniquely applies these structural insights to the specific problem. {More importantly, for the unknown channel scenario, we propose a primal-dual-based structural update rule. By explicitly embedding the structure into the optimization constraints, our approach builds a rigorous bridge between stochastic approximation and constrained optimization. This differs from the existing literature, which mainly uses these structural results to design threshold-seeking algorithms~\cite{yang2022joint,nourian2014optimal}, where convergence to the optimal solution cannot be guaranteed.}

\noindent\emph{Notations}: The notations $\mathbb{R}$, $\mathbb{R}^n$ and $\mathbb{R}^{n\times m}$ denote the sets of  real numbers, $n$-dimensional vectors, and $n$-by-$m$-dimensional matrices, respectively. The set of natural numbers is represented by $\mathbb{N}$. $\mathbb{E}(\cdot)$ and $\mathbb{P}(\cdot)$ denote the expectation of a random variable and the probability of an event, respectively. For a matrix $A$, $A^T$ is its transpose, and $\|A\|$ refers to its spectral norm. Positive semidefinite (definite) matrices are denoted by $A\succeq0$ ($A\succ0$). The notations $I$ and $\bm 0$ represent the identity and zero matrices with compatible dimensions, respectively. When there is no confusion, $\times$ represents the direct product. Finally, $f^{(i)}(\cdot)$ denotes the $i$-fold composition of a function, defined recursively as $f^{(i)}(\cdot) = f^{(i-1)}[f(\cdot)]$.

\section{Problem Setup}
Consider $N$ discrete linear time-invariant (LTI) processes with $N$ sensors (Fig.~\ref{fig:framework}). The $i$-th system is observed by the $i$-th sensor, i.e.,
\begin{equation*}
x^{(i)}_{k+1} 	= A_ix^{(i)}_k + w^{(i)}_k,\quad y^{(i)}_k		=C_ix^{(i)}_k + v^{(i)}_k,
\end{equation*}
where $k$ is the discrete time index, $x_k^{(i)}\in\mathbb{R}^{n_i}$ is the system state,  $y^{(i)}_k\in\mathbb{R}^{m_i}$ is the sensor measurement vector, $w_k^{(i)}$ and $v_k^{(i)}$ are mutually {uncorrelated} Gaussian white noise with covariances $W_i\succeq 0$ and $V_i\succ0$, respectively. The initial state $x_0^{(i)}$ is assumed to be zero-mean Gaussian with covariance $P_0^{(i)}\succeq 0$, and uncorrelated with $w_k^{(i)}$ and $v_k^{(i)}$ for all $k$. We assume $(A_i,C_i)$ is detectable and $(A_i,\sqrt{W_i})$ is stabilizable. Additionally, we assume $\|A_i\|\geq 1$ (This is discussed later in Remarks~\ref{rmk:assumption1} and~\ref{rmk:stable}).\footnote{The same assumption has been adopted in~\cite{peng2017optimal,zhang2020false}.}

\begin{figure}[!tbp]
	\centering
	\includegraphics[width=0.9\linewidth]{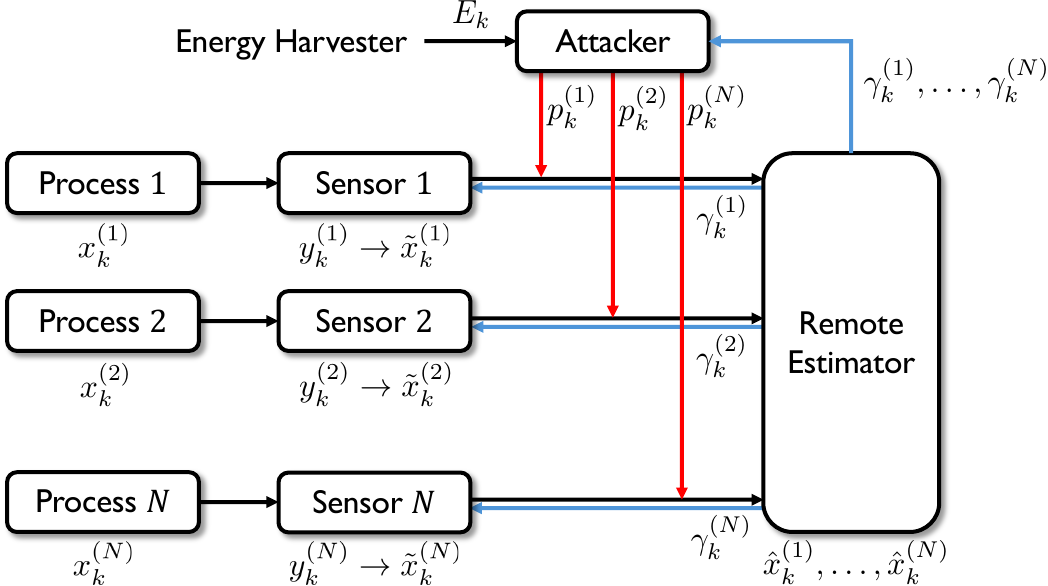}
	\caption{System diagram.}
	\label{fig:framework}
\end{figure}
\subsection{Sensor Local State Estimation}
As shown in Fig.~\ref{fig:framework}, each sensor is assumed to be ``smart'' and capable of running a Kalman filter locally to estimate the system state based on its accumulated measurements, i.e.,
\begin{align*}
\tilde{x}_k^{(i)} 			&\triangleq \mathbb{E}\left[x_k^{(i)}|y_0^{(i)},\dots,y_k^{(i)}\right],\\
\tilde{P}_k^{(i)} 			&\triangleq \mathbb{E}\left[(x_k^{(i)}-\tilde{x}_k^{(i)})(x_k^{(i)}-\tilde{x}_k^{(i)})^T|y_0^{(i)},\dots,y_k^{(i)}\right],
\end{align*}
where $\tilde{x}_k^{(i)}$ and $\tilde{P}_k^{(i)}$ are the local estimate and the corresponding estimation error covariance, respectively.

For notational simplicity, define the following functions:
\begin{align*}
\tilde{g}_i(X)	&\triangleq X-XC_i^T[C_iXC_i^T+V_i]^{-1}C_iX,\\
h_i(X)		&\triangleq A_iXA_i^T+W_i,\qquad g_i(X)		\triangleq \tilde{g}_i\circ h_i(X).
\end{align*}
It is well-established that $\tilde{P}_k^{(i)}$ converges exponentially to a steady-state $\bar{P}_{i}$, which can be uniquely computed by solving the equation $g_i(X)=X$~\cite{anderson2012optimal}. As we herein focus on the asymptotic performance over an infinite time horizon, we assume without loss of generality that the Kalman filter at the sensors have reached the steady state at $k=0$, i.e., $\tilde{P}_k^{(i)}=\bar{P}_{i}$.

\subsection{Remote State Estimation}
After obtaining the local estimate, the sensors transmit their data to a remote state estimator via wireless communication channels. Due to the unreliability of wireless communications, the packet transmission suffers from dropouts. Let $\gamma_k^{(i)}\in\{0,1\}$ denote the binary variable indicating whether the remote estimator receives the sensor's local estimate at time $k$, i.e., $\gamma_k^{(i)}=1$ if the estimator receives $\tilde{x}_k^{(i)}$, and $\gamma_k^{(i)}=0$ otherwise.

Define $\tau_k^{(i)}$ as the duration between $k$ and the most recent reception of the sensor's estimate, i.e.,
\begin{equation}\label{eq:tau}
\tau_k^{(i)}\triangleq\min\{t\geq 0:\gamma^{(i)}_{k-t}=1\}.
\end{equation}
Since we are interested in the asymptotic behavior of the system, we assume $\gamma_0^{(i)}=1$ without loss of generality.

Given $\gamma_k^{(i)}$ and $\tau_k^{(i)}$, the minimum mean square error (MMSE) estimate and the relative error covariance at the remote estimator obey the stochastic recursions~\cite{shi2010kalman}: $\hat{x}^{(i)}_k=\tilde{x}^{(i)}_k$ and $P^{(i)}_k=\bar{P}_{i}$ if $\gamma^{(i)}_k=1$; $\hat{x}^{(i)}_k=A_i\hat{x}_{k-1}^{(i)}$ and $P^{(i)}_k=h_i(P^{(i)}_{k-1})=h_i^{\tau_k^{(i)}}(\bar{P}_{i})$ otherwise.

\subsection{Self-Sustainable Attack}
Equipped with energy harvesters, the attacker can launch jamming attacks, thus reducing the estimation accuracy of the remote estimator by utilizing the harvested energy. 

Let $b_k$ represent the battery level of the attacker at time $k$, and let $b_{\max}$ denote the maximum battery capacity. At each time $k$, the attacker determines the amount of jamming attack power $p_k^{(i)}\geq 0$ to inject into the $i$-th sensor-estimator communication link. The power $p_k^{(i)}$ is determined to maximize the estimation error covariance at the remote estimator, which will be introduced later.

\begin{figure}[!tbp]
	\centering
	\includegraphics[width=0.9\linewidth]{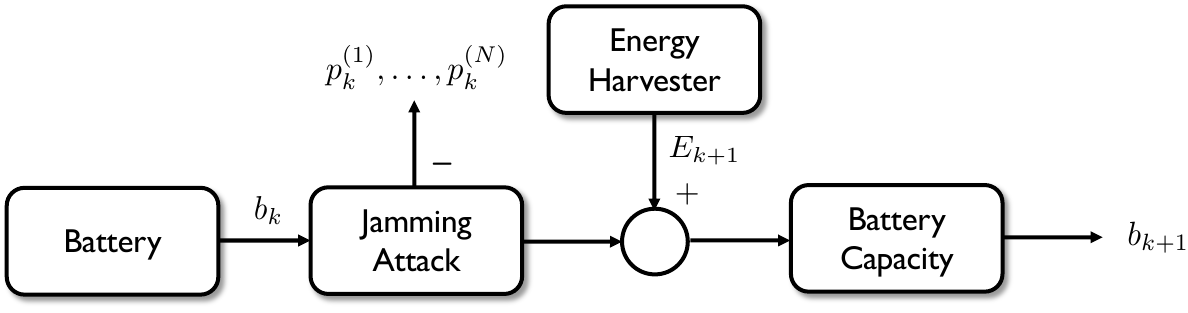}
	\caption{Battery dynamics of the attacker.}
	\label{fig:battery}
\end{figure}

Since the jamming power is limited by the available energy in the battery and the attacker's transmission capacity, we have the following constraints on $p_k^{(i)}$:
\begin{equation}\label{eq:p_con}
p_k^{(i)}\leq p_{\max},\qquad \sum_{i=1}^N p_k^{(i)}\leq b_k,
\end{equation}
where $p_{\max}$ represents the maximum power the attacker can allocate to each channel.

On the other hand, let $E_k$ denote the amount of energy harvested by the attacker during the interval between time $k$ and $k+1$. The battery dynamics (Fig.~\ref{fig:battery}) are given by~\cite{ho2012optimal,leong2018optimal}
\begin{equation}\label{eq:bat_model}
b_{k+1} = \min\{b_k-\sum_{i=1}^N p_k^{(i)}+\lfloor E_{k}\rfloor,b_{\max}\},
\end{equation}
where $\lfloor\cdot\rfloor$ is the floor function that returns the greatest integer less than or equal to the given input.

Generally speaking, $E_k$ is time-varying and may exhibit temporal correlation. For instance, the amount of solar energy can vary significantly depending on the time of day or weather conditions. In this paper, we assume that $E_k$ is represented by a finite-state Markov model~\cite{ho2012optimal}, where the set of possible states is denoted by $\mathcal{E}=\{e_1,\dots,e_\ell\}$. The transition probability is
\begin{equation}\label{eq:e_trans}
\mathbb{P}(E_{k+1}=e_q|E_{k}=e_{p})=t^e_{e_p,e_q}.
\end{equation}

{
While models like renewal processes~\cite{li2016power}, fluid models~\cite{gautam2015efficiently}, or Brownian motion~\cite{abdelrahman2016diffusion} exist, they often either fail to capture the temporal dependence of solar radiation or introduce unnecessary complexity. Empirical studies have demonstrated that finite-state Markov models provide accurate approximations for solar energy harvesting~\cite{ho2012optimal}. Consequently, they have been widely adopted in this domain. 
}

\subsection{Wireless Communication}
Let $\lambda_k^{(i)}\triangleq\mathbb{P}(\gamma_k^{(i)}=1)$ represent the packet arrival rate for the wireless transmission. It is influenced by two factors: i) the attackers' jamming power $p_k^{(i)}$; ii) the channel gains of both the sensor-estimator and attacker-estimator links.

Denote $H^{(i)}_k$ and $G_k^{(i)}$ as the channel gains of the sensor-estimator and attacker-estimator links at time $k$, respectively.  Assume the gains are Markovian with states $\mathcal{S}=\{s_1,\dots,s_l\}$. The transition probabilities are:
\begin{equation}\label{eq:transition}
\begin{aligned}
\mathbb{P}	&(H^{(i)}_{k+1}=s_q|H^{(i)}_{k}=s_{p})\\
			&\qquad\qquad=\mathbb{P}(G^{(i)}_{k+1}=s_q|G^{(i)}_{k}=s_{p})=t^s_{s_p,s_q}.
\end{aligned}
\end{equation}

\begin{rem}{\rm
The finite-state Markov model can be regarded as an extension of the well-known Gilbert-Elliott model with higher model accuracy~\cite{wang1995finite}.
}\end{rem}
\begin{rem}{\rm
While the transition probability is not necessary to be identical across systems and communication links as shown in~\eqref{eq:transition}, for better presentation of the results, we assume it is identical without loss of generality.
}\end{rem}
Consider a unit time slot. Under the jamming attack, the SINR for the $i$-th sensor at time $k$ is given by
\begin{equation}\label{eq:sinr}
{\rm SINR}_k^{(i)}\triangleq\frac{H_k^{(i)}}{G_k^{(i)}p_k^{(i)}+\sigma_i^2},~  p_k^{(i)}\leq p_{\max}, ~ \sum_{i=1}^N p_k^{(i)}\leq b_k,
\end{equation}
where $\sigma_i^2$ represents the power of additive white Gaussian noise in the $i$-th channel.
{
\begin{rem}{\rm
In this model~\eqref{eq:sinr}, the sensor's transmission power is assumed to be constant and normalized to unity. This allows us to focus on the optimal power allocation of the attacker. Extending the model to include adaptive sensor power transmission would lead to a Markov game formulation~\cite{littman1994markov}. We defer it as our future research.
}\end{rem}}

Using error-detection mechanisms such as cyclic redundancy checks, the packet arrival rate is determined by the associated SINR~\cite{alma991002544119703412}, i.e., $\lambda_k^{(i)}=f({\rm SINR}_k^{(i)})$, where $f(\cdot):\mathbb{R}\mapsto[0,1]$ is a non-decreasing function specified by the modulation mode used by the sensors and the estimator. {While our results hold for a general non-decreasing function $f(\cdot)$, we present the following example for illustrative purposes.}
\begin{exmp}\label{exam:qam}{\rm
Considering a quadrature amplitude modulation (QAM) scheme, we have $f({\rm SINR}_k^{(i)})=[1-\alpha Q(\sqrt{b {\rm SINR}_k^{(i)}})]^B$, where $\alpha$, $b$ and $B$ are channel parameters and $Q(x)\triangleq\frac{1}{\sqrt{2\pi}}\int_{x}^{\infty}\exp(-\delta^2/2)d\delta$.
}\end{exmp}
{
\begin{rem}\label{rmk:assumption1}{\rm
It is worth mentioning that when $\|A_i\|<1$, the estimation error converges to a bounded value even under persistent worst-case attacks, i.e., with $p_k^{(i)}= p_{\max}$ for all $k$ and $i$. In other words, the attacks cannot destabilize the system. To ensure the problem is well-posed, we assume $\|A_i\|\geq1$ herein and defer the study of the case $\|A_i\|<1$ to our future work. Nevertheless, simulation results (see Section 6) indicate that the proposed algorithms remain applicable even when $\|A_i\|<1$.
}\end{rem}}
\subsection{Problem of Interest}
Assume the attacker knows its battery level $b_k$, the harvested energy $E_k$ and its transition probability~\eqref{eq:e_trans}. Additionally, the attacker knows the system dynamics, i.e., $A_i$, $C_i$, {$W_i$, and $V_i$}, and can obtain $\tau^{(i)}_{k-1}$ by eavesdropping on the feedback channels from the estimator to the sensors.\footnote{The same assumption has been adopted in prior studies on related topics~\cite{peng2017optimal,ren2018attack}.} Regarding the channel information, two scenarios are considered:\\
(1) \underline{\it Perfect channel information}: the attacker has access to the values of $H^{(i)}_k$, and $G^{(i)}_k$ via channel estimation and knows the transition probability~\eqref{eq:transition} perfectly.\\
(2) \underline{\it Unknow channel model}: the attacker knows the values of $H^{(i)}_k$, and $G^{(i)}_k$ but not the transition probability.\footnote{This assumption is also commonly found in numerous works within the literature~\cite{lyu2020irs,zhang2017wireless}.}
\begin{rem}{\rm
	The above assumptions imply a strong attacker. This follows Kerckhoffs's principle, which states that security should not rely on secrecy~\cite{van2014encyclopedia}. Studying the ``worst-case" attack helps us understand the system's security level. If instead the attacker has limited information, the MDP model (Section~\ref{chap:results}) should be extended to a partially observable MDP (POMDP) model~\cite{krishnamurthy2016partially}. 
}\end{rem}
{
\begin{rem}{\rm
The ``unknown channel model'' scenario is motivated by the fact that channel estimation can be performed on a fast timescale using pilot signals~\cite{liu2014channel}, whereas learning the statistical transition model typically requires much longer observation periods. The scenario where both factors are unknown can be modeled as a POMDP as well. While POMDPs offer a more general framework, they significantly increase computational complexity due to the need for belief-state tracking~\cite{krishnamurthy2016partially}. We therefore leave it for future research.
}\end{rem}}

The attacker aims to smartly utilize the harvested energy to disturb the estimation process of the remote estimator. More specifically, the objective function is
\begin{equation}\label{eq:cost}
J(p)=\liminf_{T\to\infty}\frac{1}{T+1}\sum_{k=0}^{T}\sum_{i=1}^{N}\tr\left\{\mathbb{E}\left[P_k^{(i)}\right]\right\},
\end{equation}
where $p=\{p_0^{(1)},\dots,p_0^{(N)},\dots,p_k^{(1)},\dots,p_k^{(N)},\dots\}$ is the jamming power subject to the energy constraint~\eqref{eq:p_con}.

The problem addressed in this paper can be summarized as an optimization problem as follows:
\begin{align}
\max_{p} 	&\quad J(p)\label{eq:problem}\\
{\rm s.t.}		&\quad 0\leq p_k^{(i)}\leq p_{\max},~ i = 1,\dots,N,~\forall k\in\mathbb{N},\notag\\
			&\quad \sum_i p_k^{(i)}\leq b_k,~\forall k\in\mathbb{N},\notag\\
			&\quad b_{k+1} = \min\{b_k-\sum_{i} p_k^{(i)}+\lfloor E_{k}\rfloor,b_{\max}\}, ~\forall k\in\mathbb{N}.\notag
\end{align}

\section{MDP Formulation and Structural Results}\label{chap:results}
In this section, we solve the problem enunciated in the preceding section with an MDP model and prove the existence of an optimal stationary policy under mild assumptions. Additionally, we establish some structural properties of the optimal policy, which enables efficient computation in both scenarios of perfect and unknown channel information.
\subsection{MDP Design}
Since in practice, the transmitter is typically programmed to operate with a finite number of transmission power levels only~\cite{fu2012practical}, we herein assume the attacker chooses $p_k^{(i)}$ discretely, i.e., $p_k^{(i)}\in\mathcal{P}=\{0,1,\dots, \lfloor p_{\max}\rfloor\}$.

The MDP formulation requires the following definitions.

\noindent 1) \emph{State:} The attacker's state encapsulates its information for decision-making. Since the power allocation strategy depends on the channel state $H_k^{(i)}$, $G_k^{(i)}$, the estimation error covariance of the remote estimator $P_k^{(i)}$, the battery level $b_k$ and the harvested energy $E_k$, the state at the beginning of time $k$, i.e., $\phi_k$, is defined as follows: $\phi_k=(b_k, E_k, \Lambda_k^{(1)},\dots,\Lambda_k^{(N)})$, where $\Lambda_k^{(i)}=(H_k^{(i)},G_k^{(i)},\tau_{k-1}^{(i)})$.

We note that the state space $\mathbb{S}$ is countably infinite, i.e., $\mathbb{S}=\mathcal{B}\times\mathcal{E}\times\prod_{i=1}^N\left\{H^{(i)},G^{(i)},\tau^{(i)})\right\}$, where $\mathcal{B}=\{0,1,\dots,b_{\max}\}$, $H^{(i)},G^{(i)}\in\mathcal{S}$ and $\tau^{(i)} \in\mathbb{N}$.

\noindent 2) \emph{Action:} The action $a_k$ is the jamming power at time $k$, i.e., $a_k=[a_k^{(1)},\dots,a_k^{(N)}]$, where $a_k^{(i)}=p_k^{(i)}\in\mathcal{P}$.

The available action for state $\phi_k$ is $\mathbb{A}_{\phi_k}=\{(a^{(1)}_k,\dots,\allowbreak a^{(N)}_k)| \sum_{i}a^{(i)}_k\leq b_k,~a^{(i)}\in\mathcal{P}, ~i=1,\dots,N\}$. Then the action space is
$\mathbb{A}=\cup_{k=1}^\infty\mathbb{A}_{\phi_k}=
\{(a^{(1)},\dots,a^{(N)})|	\sum_{i}a^{(i)}\allowbreak\leq b_{\max},~a^{(i)}\in\mathcal{P}, ~i=1,\dots,N\}$.

{
Note that while the available action set $\mathbb{A}_{\phi_k}$ is constrained by the instantaneous battery level $b_k$, the global action space $\mathbb{A}$ is defined relative to the maximum battery capacity $b_{\max}$. This ensures that $\mathbb{A}$ remains a state-independent set that encompasses all possible actions $a_k$ regardless of the current state.}
\begin{rem}{\rm
As the state $\phi_k$ is defined at the start of time $k$, $\tau_{k}^{(i)}$ is unavailable to the attacker for decision-making.
}\end{rem}
\noindent 3) \emph{Transition Probability:} {Define $\mathbb{T}:\mathbb{S}\times\mathbb{A}\times\mathbb{S}\mapsto[0,1]$ as the transition function. Let $\phi=(b, E, \Lambda^{(1)},\dots,\Lambda^{(N)})$, where $\Lambda^{(i)}=(H^{(i)},G^{(i)},\tau^{(i)})$. The transition probabilities can be written as
\begin{equation*}
\begin{aligned}
&\mathbb{P}(\phi_{k+1}=\phi'|\phi_k=\phi,a_k=a)\\
&\quad=\mathbb{P}(b_{k+1}=b',E_{k+1}=e'|b_k=b,E_k=e,a_k=a)\\
&\quad\quad\times\prod_{i=1}^N\mathbb{P}(\Lambda^{(i)}_{k+1}=\Lambda^{(i)'}|\Lambda^{(i)}_k=\Lambda^{(i)},a^{(i)}_k=a^{(i)}),	
\end{aligned}
\end{equation*}
where $\mathbb{P}(b_{k+1}=b',E_{k+1}=e'|b_k=b,E_k=e,a_k=a)$ characterizes the battery dynamics, and $\mathbb{P}(\Lambda^{(i)}_{k+1}=\Lambda^{(i)'}|\Lambda^{(i)}_k=\Lambda^{(i)},a^{(i)}_k=a^{(i)})$ captures the channel dynamics and the packet reception process. Using the model~\eqref{eq:bat_model}-\eqref{eq:transition}, these probabilities are given by
\begin{equation*}
\begin{aligned}
&\mathbb{P}(b_{k+1}=b',E_{k+1}=e'|b_k=b,E_k=e,a_k=a)\\
=&\begin{cases}
t^e_{e,e'}, 	& \text{if~\eqref{eq:bat_model} holds};\\
0, 				&\text{otherwise},
\end{cases}\\
	&\mathbb{P}(\Lambda^{(i)}_{k+1}=\Lambda^{(i)'}|\Lambda^{(i)}_k=\Lambda^{(i)},a^{(i)}_k=a^{(i)})\\
=	&t^s_{H^{(i)},H^{(i)'}}t^s_{G^{(i)},G^{(i)'}}
\begin{cases}
	\lambda^{(i)},  		&{\rm if~}\tau^{(i)'}=0;\\
	1-\lambda^{(i)},		&{\rm if~}\tau^{(i)'}=\tau^{(i)}+1;\\
	0, 					&{\rm otherwise}.
\end{cases}
\end{aligned}
\end{equation*}
More specifically, given $e'$, $b'$ is uniquely determinted by $(b,a)$ according to~\eqref{eq:bat_model}. Hence, the transition is feasible only if~\eqref{eq:bat_model} is satisfied, in which case its probability is governed by the energy arrival process, i.e., $t^e_{e,e'}$. When ~\eqref{eq:bat_model} is not satisfied, the probability is zero. On the other hand, according to~\eqref{eq:tau}, $\tau^{(i)'}$ evolves depending on the packet reception indicator $\gamma^{(i)}$: it resets to $0$ if a packet is successfully received, and increases to $\tau^{(i)'}=\tau^{(i)}+1$ if not received. All the other values of $\tau^{(i)'}$ are impossible and thus have zero probability.}

\noindent 4) \emph{One-Stage Reward Function:} Considering the reward function defined in~\eqref{eq:cost}, the one-stage reward function for state $\phi_k$ and action $a_k$ becomes
\begin{equation}\label{eq:reward}
\begin{aligned}
&r(\phi_k,a_k)	=\sum_{i=1}^N \tr\left\{\mathbb{E}\left[P_k^{(i)}|\phi_k,a_k\right]\right\}\\
			=&\sum_{i=1}^N\left\{\lambda_k^{(i)}\tr[\bar{P}_{i}]+(1-\lambda_k^{(i)})\tr[h_i^{\tau^{(i)}_{k-1}+1}(\bar{P}_{i})]\right\}.
\end{aligned}
\end{equation}
The objective function in~\eqref{eq:cost} is then given by the average expected sum of one-stage rewards, i.e., $J(p)=\liminf_{T\to\infty}\frac{1}{T+1}\sum_{k=0}^{T}r(\phi_k,a_k)$.

{Since both the transition probability and the reward function are time-homogeneous, we omit the time index $k$ for notational simplicity. We also shorthand the transition probability $\mathbb{P}(\phi_{k+1}=\phi'|\phi_k=\phi,a_k=a)$ as $\mathbb{P}(\phi'|\phi,a)$. Additionally,  we denote current values with $(\cdot)$ and their values at the next time step with $(\cdot)'$.}

Denote $J^*$ as the optimal value of the objective, i.e., $J^*=\sup_{p}J(p)$. Then $J^*$ satisfies the following average-cost Bellman equation:
\begin{equation}\label{eq:bellman}
J^*+V(\phi)=\max_{a\in\mathbb{A}_\phi}\{r(\phi,a)+\sum_{\phi'\in\mathbb{S}}\mathbb{P}(\phi'|\phi,a)V(\phi')\},
\end{equation}
where $V(\cdot):\mathbb{S}\mapsto\mathbb{R}$ is the relative value function.

However, the Bellman equation~\eqref{eq:bellman} may fail to have a solution~\cite{bertsekas2011dynamic}, for instance, when  $J^*$ is infinite. Let $\theta(\cdot)$ be a deterministic policy that maps states to actions, i.e., $\mathbb{S}\mapsto\mathbb{A}$, and let $\Theta$ denote the set of all feasible policies. In the following, we will establish the existence of an optimal deterministic stationary policy that solves~\eqref{eq:bellman}.
\begin{assum}\label{assumption:stable}{\rm
For every sensor $i=1,\dots,N$, assume there exisits $\kappa_i\in[0,1)$ such that
\begin{equation*}
\inf_{\Xi^{(i)}}\sum_{\Xi^{(i)'}}\left[f({\rm SINR}^{(i)'}_{\min})t_{H^{(i)},H^{(i)'}}t_{G^{(i)},G^{(i)'}}\right]\geq1-\frac{\kappa_i}{\|A_i\|^2},
\end{equation*}
where the quantities are defined in~\eqref{eq:transition}-\eqref{eq:sinr} and as follows:
\begin{equation*}
\Xi^{(i)}=(H^{(i)},G^{(i)}),\quad {\rm SINR}^{(i)}_{\min}={H^{(i)}}/[\lfloor p_{\max} \rfloor G^{(i)}+\sigma_i^2].
\end{equation*}
}\end{assum}
\begin{lem}\label{lemma:bound}{\rm
If Assumption~\ref{assumption:stable} holds, then $\mathbb{E}[P_k^{(i)}]$ is exponentially bounded for any attack allocations, i.e.,
\begin{equation*}
\mathbb{E}[\tr(P_k^{(i)})]\leq c_i\kappa_i^k+d_i, ~i=1,\dots,N, ~k\in\mathbb{N},
\end{equation*}
for some finite constants $c_i$ and $d_i$.
}\end{lem}
\begin{pf}
The proof follows directly from~\cite[Theorem 1]{quevedo2012state}, thus omitted here.
\end{pf}
\begin{rem}{\rm
Intuitively, Assumption~\ref{assumption:stable} ensures that, starting from any channel condition $\Xi_{k-1}^{(i)}$, the expected packet arrival rate does not become excessively low, even under worst-case attacks. {This guarantees that $\mathbb{E}[P_k^{(i)}]$ remains bounded as $k\to\infty$, thereby ruling out trivial scenarios in which the attacker could persistently jam a single channel and drive the estimation error covariance to infinity. Moreover, it guarantees a finite solution $J^*$ to~\eqref{eq:bellman}. This assumption is standard in remote estimation literature and are typically satisfied in practical systems with moderate communication reliability.}
}\end{rem}
\begin{thm}\label{thm:stationary}{\rm
Assume Assumption~\ref{assumption:stable} holds. There exists a deterministic stationary policy $\theta^*$, a constant $J^*$, and a function $V(\phi)$ which solve the Bellman equation~\eqref{eq:bellman}. The optimal policy $\alpha^*(\phi)=\theta^*(\phi)$ is determined by
\begin{equation*}
\alpha^*(\phi)=\arg\max_{\alpha\in\mathbb{A}_\phi}\{r(\phi,\alpha)+\sum_{\phi'\in\mathbb{S}}\mathbb{P}(\phi'|\phi,\alpha)V(\phi')\}.
\end{equation*}
}\end{thm}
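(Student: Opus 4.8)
The plan is to invoke the standard theory of average-cost MDPs on a countable state space, for which the existence of a deterministic stationary optimal policy and a solution to the Bellman equation~\eqref{eq:bellman} follows once one verifies (i) that the action sets $\mathbb{A}_\phi$ are finite, (ii) that the one-stage reward is well-behaved, and (iii) a suitable stability/ergodicity condition — typically the existence of a \emph{Lyapunov function} or the finiteness of the optimal average cost together with a uniform bound on the relative value function. Here the natural candidate reference is the vanishing-discount approach (see e.g.\ Sennott, or Bertsekas, \emph{Dynamic Programming and Optimal Control}): introduce a discount factor $\beta\in(0,1)$, let $V_\beta(\cdot)$ be the corresponding discounted optimal value function, and show that the normalized differences $V_\beta(\phi)-V_\beta(\phi_0)$ (for a fixed reference state $\phi_0$) are bounded uniformly in $\beta$. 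Taking $\beta\uparrow 1$ then yields $J^*$ and $V(\cdot)$ solving~\eqref{eq:bellman}, and the maximizing action defines $\theta^*$.

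The key steps, in order, would be: First, observe that $\mathbb{A}_\phi\subseteq\mathbb{A}$ is a finite set for every $\phi$ (since $\mathcal{P}$ is finite and $N$ is fixed), so the maximum in~\eqref{eq:bellman} is attained and no measurable-selection subtleties arise. Second, use Lemma~\ref{lemma:bound}: for any policy, $\mathbb{E}[\tr(P_k^{(i)})]\le c_i\kappa_i^k+d_i$, hence $r(\phi_k,a_k)$ has expectation bounded by $\sum_i(c_i\kappa_i^k+d_i)$, which immediately gives $J^*\le\sum_i d_i<\infty$ — this is the crucial point the authors flag, namely that the reward is unbounded above (because $\tau$ is unbounded), so finiteness of $J^*$ is not automatic and must be extracted from the exponential bound. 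Third, establish the uniform boundedness of the relative discounted value functions: here I would argue that from any state $\phi$ the chain can be driven, with probability bounded below, back to a distinguished ``renewal'' set (e.g.\ states with all $\tau^{(i)}=0$) in a controlled number of steps with controlled cost, using Assumption~\ref{assumption:stable} to control the $\tau$-coordinates and the fact that $\mathcal{B}$, $\mathcal{E}$, $\mathcal{S}$ are finite to control the remaining coordinates; this yields $|V_\beta(\phi)-V_\beta(\phi_0)|\le M(\phi)$ with $M$ independent of $\beta$. Fourth, apply the vanishing-discount limit theorem to conclude.

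The main obstacle is the third step — controlling the relative value function uniformly in the discount factor despite the countably infinite $\tau$-component and the reward being unbounded in $\tau$. The difficulty is that $h_i^{\tau}(\bar P_i)$ grows like $\|A_i\|^{2\tau}$, so a state with large $\tau^{(i)}$ both is expensive and, a priori, might be expensive to leave; the resolution rests on showing that Assumption~\ref{assumption:stable} forces the expected ``decay rate'' of $P_k^{(i)}$ to beat this growth (this is exactly the content of Lemma~\ref{lemma:bound}, with $\kappa_i<1$), so that the expected cost-to-return from a high-$\tau$ state is finite and in fact grows only linearly in $\tau$ — slow enough for the standard boundedness criterion to apply. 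A clean way to package this is to exhibit an explicit Lyapunov function of the form $L(\phi)=\sum_i \rho_i^{\tau^{(i)}}$ for suitable $\rho_i\in(\kappa_i,1)$ (or $L(\phi)=\sum_i h_i^{\tau^{(i)}}(\bar P_i)$-type terms) and verify the drift inequality $\sum_{\phi'}\mathbb{P}(\phi'|\phi,a)L(\phi')\le L(\phi)-\epsilon + b\,\mathbf{1}_{\{\phi\in\text{finite set}\}}$, which then feeds directly into the countable-state average-cost existence theorems. I would cite the relevant existence theorem (e.g.\ Sennott's conditions, or \cite{bertsekas2011dynamic}) rather than reprove it, and spend the bulk of the argument verifying this drift condition from Assumption~\ref{assumption:stable} and Lemma~\ref{lemma:bound}.
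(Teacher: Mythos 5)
Your overall route is the same as the paper's: both arguments proceed via the vanishing-discount / weighted-norm theory for average-cost MDPs on a countable state space, reduce everything to verifying a drift (Lyapunov) condition plus a growth bound on the reward, and control the relative discounted values $V_\beta(\phi)-V_\beta(\phi_0)$ through a return-to-reference-state argument that invokes Lemma~\ref{lemma:bound}. Concretely, the paper checks the conditions of \cite[Theorem 4.1]{guo2006average} with the explicit weight $\Psi(\phi)=\sum_{i}\|A_i\|^{2\tau^{(i)}}$, derives $\sum_{\phi'}\Psi(\phi')\mathbb{P}(\phi'|\phi,a)\le (\max_i\kappa_i)\Psi(\phi)+N$ directly from Assumption~\ref{assumption:stable}, bounds $r(\phi,a)\le M\Psi(\phi)$, and handles the upper bound in its condition (V) exactly by your hitting-time argument.

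One step of your sketch would fail as written. The primary Lyapunov candidate $L(\phi)=\sum_i\rho_i^{\tau^{(i)}}$ with $\rho_i\in(\kappa_i,1)$ is bounded and \emph{decreasing} in $\tau^{(i)}$, so it cannot dominate the one-stage reward, which grows like $\|A_i\|^{2\tau^{(i)}}$, and any drift inequality it satisfies says nothing about the expensive high-$\tau$ states. Relatedly, the additive drift $\mathbb{E}[L(\phi')]\le L(\phi)-\epsilon+b\,\mathbf{1}_{F}$ is Foster's condition for positive recurrence and is not sufficient when the cost is unbounded: one needs either a drift that dominates the cost itself, $\mathbb{E}[L(\phi')]\le L(\phi)-r(\phi,a)+b\,\mathbf{1}_{F}$, or, as the paper does, the multiplicative form $\mathbb{E}[\Psi(\phi')]\le\mu\Psi(\phi)+\zeta$ paired with $|r(\phi,a)|\le M\Psi(\phi)$ (conditions (I)--(II) in Appendix~\ref{apx:stationary}). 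Your parenthetical alternative --- weights of the $h_i^{\tau^{(i)}}(\bar P_i)$ type --- is the correct choice and is essentially the paper's $\Psi$; Assumption~\ref{assumption:stable} is tailored precisely so that this growing weight contracts with factor $\kappa_i<1$ under every admissible attack. With that substitution, the remainder of your outline (finite action sets, $J^*\le\sum_i d_i$ from Lemma~\ref{lemma:bound}, and the uniform bound on $V_\beta(\phi)-V_\beta(\phi_0)$) lines up with the paper's proof.
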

\begin{pf}
See Appendix~\ref{apx:stationary}.
\end{pf}
\begin{rem}\label{rmk:stable}{\rm
Constructing a measurable function $\Psi(\cdot)$ that simultaneously satisfies the intricate conditions (I) and (II) in Appendix~\ref{apx:stationary} is fundamentally challenging. In particular, when $\|A_i\|<1$, the condition $\Psi(\cdot)\geq1$ no longer holds for the constructed measurable function.
}\end{rem}

\subsection{Structural Results}\label{sec:structure}
In this part, we demonstrate that the optimal policy exhibits monotonic structures, which allows for a significant reduction in the computational complexity of the algorithms introduced in subsequent sections.

Following ideas in~\cite[Section 8.11]{puterman2014markov}, we denote the $Q$-factor of a state-action pair as
\begin{equation}\label{eq:q factor}
Q(\phi,a)\triangleq r(\phi,a)+\sum_{\phi'\in\mathbb{S}}\mathbb{P}(\phi'|\phi,a)V(\phi')-J^*.
\end{equation}
Combining $V(\phi)=\max_{a\in\mathbb{A}}Q(\phi,a)$ with~\eqref{eq:bellman}, we obtain
\begin{equation}\label{eq:q update}
J^*+Q(\phi,a)=r(\phi,a)+\sum_{\phi'\in\mathbb{S}}\mathbb{P}(\phi'|\phi,a)\max_{a'\in\mathbb{A}_{\phi'}}Q(\phi',a').
\end{equation}

To introduce ordering, with a slight abuse of notation, we denote $\phi_+=(b,E,\Lambda_+^{(1)},\dots,\Lambda_+^{(N)})$ and $\phi_-=(b,E,\Lambda_-^{(1)},\dots,\Lambda_-^{(N)})$ where each component $\Lambda_+^{(i)}$ and $\Lambda_-^{(i)}$ is given by $\Lambda^{(i)}_+=(H^{(i)},G^{(i)},\tau^{(i)}_+)$ and $ \Lambda^{(i)}_-=(H^{(i)},G^{(i)},\tau^{(i)}_-)$. Here $b$, $E$, $H^{(i)}$ and $G^{(i)}$ represent arbitrary but fixed elements. We say $\phi_-\preceq_{\tau}\phi_+$ if $\tau_-^{(i)}\leq\tau_+^{(i)}$ for all $i$. Note that $\preceq_{\tau}$ is a partial order defined on $\mathbb{S}$.

Now we are ready to present the monotonicity properties of $V(\phi)$ and $Q(\phi,a)$, respectively.

\begin{lem}[Monotonicity]\label{lemma:monototic}{\rm
The function $V(\phi)$ is monotonically increasing in $\tau^{(i)}$ for all $i$. In other words, for all pairs $(\phi_-,\phi_+)$ such that $\phi_-\preceq_{\tau}\phi_+$, we have $V(\phi_-)\leq V(\phi_+)$.
}\end{lem}
\begin{pf}
See Appendix~\ref{apx:monototic}.
\end{pf}

\begin{lem}[Monotonicity]\label{lemma:monototic q}{\rm
The $Q$-factor $Q(\phi,a)$ is monotonically increasing in $\tau^{(i)}$ for all $i$. In other words, for all pairs $(\phi_-,\phi_+)$ such that $\phi_-\preceq_{\tau}\phi_+$ and $\forall a\in\mathbb{A}_{\phi_+}\cap\mathbb{A}_{\phi_-}$, we have 
\begin{equation}\label{eq:monototic q}
Q (\phi_-,a)\leq Q (\phi_+,a).
\end{equation}
}\end{lem}
\begin{pf}
The proof follows directly from Lemma~\ref{lemma:monototic} and~\eqref{eq:q factor}, thus omitted here.
\end{pf}
Fixing the other components of the state $\phi$ except $\tau^{(i)}$, with a bit of abuse of notations, we denote the $Q$-factor in~\eqref{eq:q factor} as $Q(\tau^{(i)},a)$. Additionally, denote $a_{-,i}=(a_{-,i}^{(1)},\dots,a_{-,i}^{(N)})$ and $a_{+,i}=(a_{+,i}^{(1)},\dots,a_{+,i}^{(N)})$, where the inequality $a_{-,i}^{(i)}\leq a_{+,i}^{(i)}$ holds for the specific index $i$. Then we are prepared to present the superadditivity property of $Q(\tau,a)$.
\begin{lem}[Superadditivity]\label{lemma:super}{\rm
The $Q$-factor $Q(\tau^{(i)}, a)$ is superadditive in $(\tau^{(i)}, a^{(i)})$. In other words, for all $\tau^{(i)}_-\leq  \tau^{(i)}_+$ and for all $ a_{-,i},a_{+,i}$ satisfying $a_{-,i}^{(i)}\leq a_{+,i}^{(i)}$, the following inequality holds:
\begin{equation}\label{eq:super}
 Q (\tau^{(i)}_+,a_{+,i})- Q (\tau^{(i)}_+,a_{-,i})\geq Q (\tau^{(i)}_-,a_{+,i})- Q (\tau^{(i)}_-,a_{-,i}).
\end{equation}
}\end{lem}
\begin{pf}
See Appendix~\ref{apx:additive}.
\end{pf}

{
Unlike monotonicity, which extends globally due to the additive reward structure in~\eqref{eq:reward}, superadditivity is established only locally for each sensor $i=1,\dots,N$. This is because the shared battery constraint $\sum_{i=1}^N a_k^{(i)}\leq b_k$ couples the action variables across sensors, thereby generally preventing global superadditivity from holding. Nevertheless, the local result is sufficient for proving the threshold property in the following Theorem~\ref{thm:structure}.
}
\begin{thm}\label{thm:structure}{\rm
The optimal allocated power for the $i$-th sensor, i.e., $\alpha^{(i)*}(\phi)$, is non-decreasing in $\tau^{(i)}$, where $i=1,\dots,N$. In other words, $\forall\phi_-\preceq_{\tau}\phi_+$, we have $\alpha^{(i)*}(\phi_-)\leq\alpha^{(i)*}(\phi_+)$.
}\end{thm}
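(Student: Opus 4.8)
The plan is to invoke the standard result on monotone optimal policies for MDPs (see~\cite[Theorem 8.11.3]{puterman2014markov}), which states that if the $Q$-factor is superadditive in the state–action pair (with respect to the relevant partial orders), then there exists an optimal policy that is monotone in the state. We already have two of the three ingredients in place: Lemma~\ref{lemma:monototic q} gives monotonicity of $Q(\phi,a)$ in each $\tau^{(i)}$, and Lemma~\ref{lemma:super} gives superadditivity of $Q(\tau^{(i)},a)$ in the pair $(\tau^{(i)},a^{(i)})$. The remaining task is to combine these into the statement that the argmax $\alpha^{(i)*}(\phi)$ is non-decreasing in $\tau^{(i)}$.

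First I would fix a sensor index $i$ and freeze all components of $\phi$ other than $\tau^{(i)}$, writing $Q(\tau^{(i)},a)$ as in the lead-up to Lemma~\ref{lemma:super}. Take $\tau^{(i)}_-\le\tau^{(i)}_+$ and let $\alpha_-=\alpha^{*}(\phi_-)$ and $\alpha_+=\alpha^{*}(\phi_+)$ be optimal actions at the two states. The goal is to show $\alpha_-^{(i)}\le\alpha_+^{(i)}$. Suppose for contradiction that $\alpha_-^{(i)}>\alpha_+^{(i)}$. Construct the ``swapped'' actions $\hat a_-$ (equal to $\alpha_-$ but with its $i$-th component replaced by $\alpha_+^{(i)}$) and $\hat a_+$ (equal to $\alpha_+$ but with its $i$-th component replaced by $\alpha_-^{(i)}$). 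One must check that $\hat a_-\in\mathbb{A}_{\phi_-}$ and $\hat a_+\in\mathbb{A}_{\phi_+}$: this is where the structure of the action constraint matters, since lowering the $i$-th component of a feasible action keeps the budget constraint $\sum_j a^{(j)}\le b$ satisfied, and raising it from $\alpha_+^{(i)}$ to $\alpha_-^{(i)}$ is feasible at $\phi_+$ because $\alpha_-^{(i)}$ was already feasible at $\phi_-$, which has the same $b$ and $E$. Then apply the superadditivity inequality~\eqref{eq:super} with $a_{-,i}:=$ (common action with $i$-th entry $\alpha_+^{(i)}$) and $a_{+,i}:=$ (common action with $i$-th entry $\alpha_-^{(i)}$), rearranged to show that swapping cannot strictly decrease the objective: combining optimality of $\alpha_-$ at $\tau^{(i)}_-$ and $\alpha_+$ at $\tau^{(i)}_+$ with~\eqref{eq:super} forces $Q(\tau^{(i)}_+,\hat a_+)\ge Q(\tau^{(i)}_+,\alpha_+)$, so $\hat a_+$ is also optimal at $\phi_+$ and we may take an optimal action with larger $i$-th component, eliminating the contradiction. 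A tie-breaking convention (largest optimal action) then yields monotonicity of $\alpha^{(i)*}$ in $\tau^{(i)}$.

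Iterating this componentwise argument over all $i$, and using that the partial order $\preceq_\tau$ is exactly the coordinatewise order on $(\tau^{(1)},\dots,\tau^{(N)})$, gives $\alpha^{(i)*}(\phi_-)\le\alpha^{(i)*}(\phi_+)$ for every $i$ whenever $\phi_-\preceq_\tau\phi_+$, which is the claim.

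The main obstacle I anticipate is the feasibility bookkeeping for the swapped actions under the coupled budget constraint $\sum_j a^{(j)}\le b$: unlike a product action space, here changing one component can affect whether the whole vector is admissible, so the interchange argument must be carried out carefully, possibly adjusting more than one component or arguing that only the $i$-th component needs changing because $b$ and $E$ are held fixed across $\phi_-$ and $\phi_+$. A secondary subtlety is that Lemma~\ref{lemma:super} is stated only for actions differing in the single coordinate $i$, so the interchange construction must stay within that regime; keeping the non-$i$ components common between $\hat a_-$ and $\hat a_+$ is what makes the superadditivity lemma directly applicable.
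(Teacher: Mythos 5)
Your overall route is the same as the paper's: the paper's proof of Theorem~\ref{thm:structure} is a two-line appeal to the superadditivity in Lemma~\ref{lemma:super}, observing that \eqref{eq:super} implies the single-crossing property ``if $Q(\tau^{(i)}_-,a_{+,i})\geq Q(\tau^{(i)}_-,a_{-,i})$ then $Q(\tau^{(i)}_+,a_{+,i})\geq Q(\tau^{(i)}_+,a_{-,i})$,'' which is exactly the standard monotone-comparative-statics argument you invoke. However, your execution contains one step that fails as justified. You define $\hat a_+$ by replacing the $i$-th component of $\alpha_+$ with the larger value $\alpha_-^{(i)}$ and claim $\hat a_+\in\mathbb{A}_{\phi_+}$ ``because $\alpha_-^{(i)}$ was already feasible at $\phi_-$.'' Under the coupled budget constraint $\sum_j a^{(j)}\leq b$ this is a non sequitur: feasibility of $\alpha_-^{(i)}$ as a component of the vector $\alpha_-$ says nothing about the sum $\sum_{j\neq i}\alpha_+^{(j)}+\alpha_-^{(i)}$, which can exceed $b$ (e.g.\ $b=2$, $\alpha_-=(2,0)$, $\alpha_+=(0,2)$ gives $\hat a_+=(2,2)$). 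You flag this yourself as the main obstacle, but the proposed remedy (``adjusting more than one component'') would take you outside the scope of Lemma~\ref{lemma:super} as you have read it.

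The repair is to drop the swapped hybrids entirely. You have misread Lemma~\ref{lemma:super}: it does \emph{not} require $a_{-,i}$ and $a_{+,i}$ to agree off coordinate $i$ --- it only orders their $i$-th components, and its proof (Appendix~\ref{apx:additive}) handles the $j\neq i$ terms by using $\tau^{(j)}_-=\tau^{(j)}_+$, not by matching action components. Hence you may apply \eqref{eq:super} directly with $a_{-,i}:=\alpha^*(\phi_-)$... more precisely, supposing $\alpha_-^{(i)}>\alpha_+^{(i)}$, take $a_{+,i}:=\alpha_-=\alpha^*(\phi_-)$ and $a_{-,i}:=\alpha_+=\alpha^*(\phi_+)$, both of which lie in $\mathbb{A}_{\phi_-}=\mathbb{A}_{\phi_+}$ since the two states share the same battery level $b$. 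Then $Q(\tau^{(i)}_+,\alpha_-)-Q(\tau^{(i)}_+,\alpha_+)\geq Q(\tau^{(i)}_-,\alpha_-)-Q(\tau^{(i)}_-,\alpha_+)\geq 0$ by optimality of $\alpha_-$ at $\phi_-$, so $\alpha_-$ is also optimal at $\phi_+$ and the largest-maximizer tie-break gives $\alpha^{(i)*}(\phi_+)\geq\alpha^{(i)*}(\phi_-)$. With that substitution your feasibility worry disappears and the argument matches the paper's (which, for what it is worth, silently elides both the feasibility bookkeeping and the tie-breaking convention that you correctly identify as necessary).
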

\begin{pf}
See Appendix~\ref{apx:structure}.
\end{pf}
The structural results developed above can be leveraged to enhance the efficiency of our algorithms. In Section~\ref{sec:perfect} and Section~\ref{sec:unknown}, we will provide a detailed discussion on how to utilize these findings.

\section{Perfect Channel Information}\label{sec:perfect}
By virtue of Theorem~\ref{thm:stationary}, if the attacker has perfect channel information, we can solve~\eqref{eq:bellman} by the relative value iteration algorithm~\cite{bertsekas2011dynamic}. The iterative recursion is
\begin{equation}\label{eq:iteration}
\begin{aligned}
D_t(\phi)	&=\max_{a\in\mathbb{A}_\phi}\{r(\phi,a)+\sum_{\phi'\in\mathbb{S}}\mathbb{P}(\phi'|\phi,a)D_{t-1}(\phi')\}\\
		&-\max_{a\in\mathbb{A}_{\phi_f}}\{r(\phi_f,a)+\sum_{\phi'\in\mathbb{S}}\mathbb{P}(\phi'|\phi_f,a)D_{t-1}(\phi')\},
\end{aligned}
\end{equation}
where $t$ is the iteration index, $\phi_f\in\mathbb{S}$ is an arbitrarily chosen reference state, {and $D_{0}(\cdot)$ is initialized arbitrarily subject to the normalization condition $D_0(\phi_f)=0$.}
\begin{rem}{\rm{
For average-cost problems, the commonly used value iteration algorithm, i.e., $D_t(\phi)	=\max_{a\in\mathbb{A}_\phi}\{r(\phi,a)+\sum_{\phi'\in\mathbb{S}}\mathbb{P}(\phi'|\phi,a)D_{t-1}(\phi')\}$, may diverge to infinity~\cite{bertsekas2011dynamic}. The modified relative value iteration in~\eqref{eq:iteration} avoids this issue by normalizing the value function with respect to the reference state. In particular, it enforces $D_t(\phi_f)= 0$ at every iteration.}
}\end{rem}
{
\floatname{algorithm}{Algorithm}
\begin{algorithm}[!htbp]

\caption{Relative Value Iteration for Perfect Channel Information}
\label{alg:rvi}
\begin{algorithmic}[1]
\State Initialize $D_0(\phi)$ arbitrarily for all $\phi\in\mathbb{S}$, subject to $D_0(\phi_f)=0$
\State Set $t\gets 1$
\Repeat
    \For{each state $\phi\in\mathbb{S}$}
        \State Compute $D_t(\phi)$ from $D_{t-1}(\phi)$ via~\eqref{eq:iteration}
    \EndFor
    \State $t\gets t+1$
\Until{convergence}
\State \Return $V(\phi)\gets D_t(\phi)$
\end{algorithmic}
\end{algorithm}}

\begin{prop}\label{prop}{\rm
As $t\to\infty$, $D_t(\phi_f)$ and $D_t(\phi)$ in~\eqref{eq:iteration} converge to the optimal solution $J^*$ and $V(\phi)$ of the Bellman equation~\eqref{eq:bellman}, respectively.
}\end{prop}
\begin{pf}
By utilizing~\cite[Proposition 4.3.2]{bertsekas2011dynamic} and selecting a state $z=(b,E,\Lambda^{(1)},\dots,\Lambda^{(N)})$, where $\Lambda^{(i)}=(H^{(i)},G^{(i)},0)$, the proof is straightforward and is thus omitted here.
\end{pf}

Proposition~\ref{prop} indicates relative value iteration yields the optimal policy. {Algorithm~\ref{alg:rvi} summarizes the detailed procedure for computing $D_t(\phi)$ and thus the optimal policy.\footnote{Algorithm~\ref{alg:rvi} has a theoretical per-iteration complexity $\mathcal{O}(|\mathbb{S}|^2 |\mathbb{A}|)$, where $|\cdot|$ is the cardinality of the set. Fortunately, the transitions for $\tau^{(i)}$ is sparse, i.e., either increases by $1$ or resets to $0$. Since the state space of $\tau^{(i)}$ is typically much larger than that of the channel states, the complexity is approximately reduced to $\mathcal{O}(|\mathbb{S}| |\mathbb{A}|)$.}} However, this procedure can be computationally expensive when the state space is large. The monotonic property in Theorem~\ref{thm:structure} indicates that the optimal power allocations switch between certain thresholds. While these thresholds lack closed forms, gradient-estimate-based methods~\cite{nourian2014optimal} can be employed to search for them. This significantly enhances the computational efficiency of determining the optimal policy~\cite{puterman2014markov}.

Additionally, it is important to note that the countably infinite nature of the state space makes it impractical to solve the Bellman equation directly. In practice,  $\tau^{(i)}$ is truncated at some $L$, with all larger values mapped to $L$. Given the monotonic nature of the optimal policy, the solution to the truncated problem coincides with the true optimal solution when $L$ is sufficiently large.

\section{Unknown Channel Model}\label{sec:unknown}
When the transition probabilities are unknown to the attacker, solving the Bellman equation via relative value iteration becomes inapplicable. However, as an alternative approach, we can employ a stochastic approximation algorithm that does not require prior knowledge of the transition probabilities. In the remainder of this section, we first outline the standard stochastic approximation with standard updates. Building on this foundation, we propose a novel structural update mechanism that accelerates convergence by leveraging the structural results developed in Section~\ref{sec:structure}.

\subsection{Standard Update}\label{sec:std_update}
At each time $k$ and state $\phi$, the action $a$ is {selected} according to an $\epsilon$-greedy strategy based on the $k$-th iteration of the $Q$-factor\footnote{Herein the time index $k$ and the algorithm iteration number $t$ can be regarded as identical.}, i.e.,
\begin{equation}\label{eq:e_greedy}
a=
\begin{cases}
\arg\max_{a'\in\mathbb{A}_\phi}Q_k(\phi,a'),			&\text{with probability $1-\epsilon$};\\
\text{any action in $\mathbb{A}_\phi$},	&\text{with probability $\epsilon$}.
\end{cases}
\end{equation}
{
After executing action $a$, we observe the next state $\phi'$. Since the transition probability $\mathbb{P}(\phi'|\phi,a)$ is unknown, the expectation term in~\eqref{eq:q update}, i.e., $\sum_{\tilde{\phi}\in\mathbb{S}}\mathbb{P}(\tilde{\phi}|\phi,a)\allowbreak\max_{\tilde{a}\in\mathbb{A}_{\tilde{\phi}}} Q(\tilde{\phi},\tilde{a})$, is approximated by our noisy observation $\{\phi,a,\phi'\}$, i.e., $\max_{a'\in\mathbb{A}_{\phi'}}Q_{k}(\phi',a')$. Consequently, the $Q$-value is updated as
\begin{equation}\label{eq:q learning}
\begin{aligned}
Q_{k+1}(\phi,a)=Q_{k}(\phi,a)&+\xi_{k}[r(\phi,a)+\max_{a'\in\mathbb{A}_{\phi'}}Q_{k}(\phi',a')\\
&-Q_{k}(\phi,a)-Q_k(\phi_f,a_f)],
\end{aligned}
\end{equation}
where $(\phi_f,a_f)\in\mathbb{S}\times\mathbb{A}_{\phi_f}$ is an arbitrary but fixed state-action pair, and $\xi_k$ is the step size.}

Under mild assumptions on the step size\footnote{Common choices include $1/k$, ${1}/[k\log (k)]$, and ${\log (k)}/{k}$ for $k\geq 2$.}, the scheme \eqref{eq:q learning} is guaranteed to converge~\cite{abounadi2001learning}. However, the convergence rate may be slow since only one state-action pair $(\phi,a)$ is updated per observation.
To address this limitation, we propose leveraging the structural results established in Lemmas~\ref{lemma:monototic q} and~\ref{lemma:super}. These properties enable simultaneous updates of multiple state-action pairs during each iteration while preserving convergence guarantees, thereby accelerating the process.
\subsection{Structural Update}
The key insight is that Lemmas~\ref{lemma:monototic q} and~\ref{lemma:super} establish relationships between the $Q$-factors. Such ``additional'' information enables the simultaneous update of multiple state-action pairs. The following example illustrates this idea intuitively.
\begin{exmp}{\rm
Suppose a new observation $(\phi_-,a)$ arrives at time $k$ and updating $Q(\phi_-,a)$ via~\eqref{eq:q learning} results in $Q(\phi_-,a)>Q(\phi_+,a)$, where $\phi_-\preceq_{\tau}\phi_+$. Then by {Lemma}~\ref{lemma:monototic q}, this necessitates an increase in $Q (\phi_+,a)$ as well, which effectively allows for two updates from a single observation $(\phi_-,a)$.
}\end{exmp}

In the following, we formally introduce this approach in three steps, as illustrated in Fig.~\ref{fig:relation}. For clarity, we consider a simplified scenario with $N=2$, $l=1$, and $L=1$, and assume there are only two power levels, i.e., $\{0,1\}$. Note that the subsequent analysis can be easily extended to a general case.

For notation simplicity, we denote the $Q$-factor as $Q(^{\tau^{(1)}}_{\tau^{(2)}}~^{a^{(1)}}_{a^{(2)}})$, and stack them as a vector, i.e.,
\begin{figure}[!tbp]
	\centering
	\includegraphics[width=\linewidth]{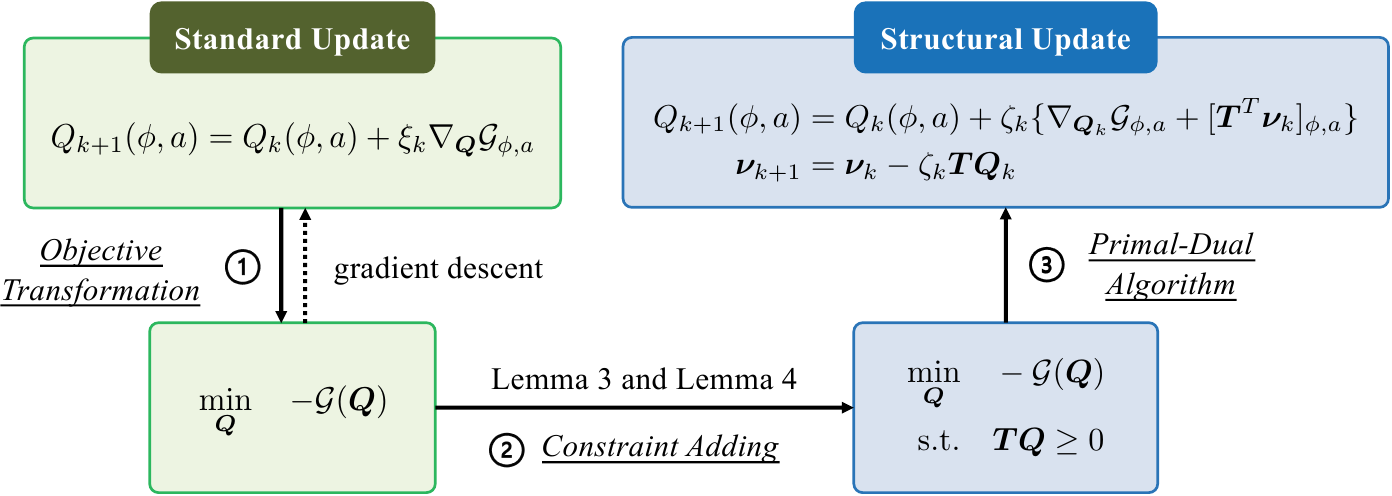}
	\caption{The relationship between the standard update and our proposed structural update.}
	\label{fig:relation}
\end{figure}
\begin{equation*}
\bm{Q}=\left[Q(^0_0~^0_0),Q(^0_0~^0_1),Q(^0_0~^1_0),Q(^0_0~^1_1), Q(^0_1~^0_0),\dots,Q(^1_1~^1_1)\right].
\end{equation*}
\subsubsection{Objective Transformation}
The update scheme~\eqref{eq:q learning} can be viewed as the gradient descent algorithm for an unconstrained optimization problem. To formalize this connection, we consider the following optimization problem:
\begin{equation}\label{eq:gradient}
\begin{aligned}
\min_{\bm Q} 	\quad -\mathcal{G}(\bm{Q}),
\end{aligned}
\end{equation}
where $\mathcal{G}(\bm{Q})$ is a differentiable function of $\bm Q$. The corresponding gradient descent update is $Q_{k+1}(\phi,a)=Q_{k}(\phi,a)+\xi_k\nabla_{\bm{Q}}\mathcal{G}_{\phi,a}$, where $\nabla_{\bm{Q}}\mathcal{G}_{\phi,a}$ is the $(\phi,a)$-th component gradient. Remarkably, if the gradient fulfills
\begin{equation*}
\nabla_{\bm{Q}}\mathcal{G}_{\phi,a}=r(\phi,a)+\max_{a'\in\mathbb{A}_{\phi'}}{Q}(\phi',a')-{Q}(\phi,a)-{Q}(\phi_f,a_f),
\end{equation*}
then the scheme~\eqref{eq:q learning} is indeed the gradient descent method for solving problem~\eqref{eq:gradient}.
\subsubsection{Constraint Adding}
{
Now we aim to transform Lemma~\ref{lemma:monototic q} and~\ref{lemma:super} into constraints on $\bm Q$ that can be incorporated into~\eqref{eq:gradient}.

From Lemma~\ref{lemma:monototic q}, we immediately obtain $Q(^0_0~^0_0) \leq Q (^0_1~^0_0)$, which can be equivalently written in linear form, i.e., $[
	-1	, 0 	, 0	,0, 1,		\bm{0}_{1\times 11}]\bm{Q}\geq 0$.
By enumerating all possible pairs and stacking the resulting inequalities, we obtain the compact matrix form $\bm{T}_m\bm{Q}\geq 0$, where $\bm{T}_m=[\bar{\bm{T}}^T_{m,1},\bar{\bm{T}}^T_{m,2},\bar{\bm{T}}^T_{m,3},\bar{\bm{T}}^T_{m,4}]^T$, $\bar{\bm{T}}_{m,1}=[\bar{\bm T}_m,\bm{0}_{4\times 3}]$, $\bar{\bm{T}}_{m,2}=[\bm{0}_{4\times 1},\bar{\bm T}_m,\bm{0}_{4\times 2}]$, $\bar{\bm{T}}_{m,4}=[\bm{0}_{4\times 3},\bar{\bm T}_m]$ and $\bar{\bm{T}}_{m,3}=[\bm{0}_{4\times 2},\bar{\bm T}_m,\bm{0}_{4\times 1}]$, with $\bar{\bm{T}}_m$ defined in \eqref{eq:tm}.

Similarly, Lemma~\ref{lemma:super} can be represented as $\bm{T}_s\bm{Q}\geq 0$, where $\bm{T}_s$ is defined in \eqref{eq:tm}.
}

\begin{figure*}[!htbp]
{\small
\setcounter{MaxMatrixCols}{20}
\begin{equation}\label{eq:tm}
\bar{\bm{T}}_m=
\begin{bmatrix}
-1	& 0 	& 0	&0& 1		& 0&0		&0&0		&0 & 0 &0& 0 \\
-1	& 0 	& 0	&0& 0		& 0&0		&0&1		&0 & 0 &0& 0 \\
0	& 0 	& 0	&0& -1		& 0&0		&0&0		&0 & 0 &0& 1 \\
0	& 0 	& 0	&0& 0		& 0&0		&0&-1		&0 & 0 &0& 1 
\end{bmatrix},\quad
\bm{T}_s=
\begin{bmatrix}
1 & -1 & 0  & 0  & -1 & 1 & 0 & 0  & 0  & 0 & 0  & 0 & 0  & 0  & 0  & 0 \\
0 & 0  & 1  & -1 & 0 & 0 & -1 & 1  & 0  & 0 & 0  & 0 & 0  & 0  & 0  & 0 \\
0 & 0  & 0  & 0  & 0 & 0 & 0 & 0  & 1  & -1 & 0  & 0 & -1 & 1  & 0  & 0 \\
0 & 0  & 0  & 0  & 0 & 0  & 0 & 0 & 0  & 0 & 1  & -1 & 0  & 0 & -1  & 1 \\
1 & 0  & -1 & 0  & 0 & 0  & 0 & 0  & -1 & 0 & 1  & 0 & 0  & 0  & 0  & 0 \\
0 & 1  & 0  & -1 & 0 & 0  & 0 & 0  & 0  & -1 & 0 & 1 & 0  & 0  & 0  & 0 \\
0 & 0  & 0  & 0  & 1 & 0  & -1 & 0  & 0  & 0 & 0 & 0 & 0 & -1 & 0  & 1 \\
0 & 0  & 0  & 0  & 0 & 0  & 0 & 0  & 0  & 0 & -1 & 1 & 0  & 0  & -1 & 1
\end{bmatrix}.
\end{equation}}
\end{figure*}

This formulation allows us to interpret the learning scheme as constrained optimization:
\begin{equation}\label{eq:gradient new}
\min_{\bm Q} 	\quad -\mathcal{G}(\bm{Q}),\qquad\qquad{\rm s.t.}		\quad \bm{T}\bm{Q}\geq0,
\end{equation}
where $\bm{T}=[\bm{T}_m;\bm{T}_s]$. 
\subsubsection{Primal-Dual Algorithm}
Now our task becomes solving the problem~\eqref{eq:gradient}. Since $\nabla^2(-\mathcal{G}(\bm Q))\succeq0$ and the constraint is linear, \eqref{eq:gradient} is convex. Consequently, the following primal-dual algorithm can be applied to solve the problem:
\begin{align}
{Q}_{k+1}(\phi,a)	&={Q}_k(\phi,a)+\xi_k\{\nabla_{\bm {Q}_k}\mathcal{G}_{\phi,a}+[\bm{T}^T\bm{\nu}_k]_{\phi,a}\},\label{eq:primal}\\
\bm{\nu}_{k+1}		&=\max\{\bm{\nu}_k-\xi_k\bm{T}\bm{Q}_k,0\}.\label{eq:dual}
\end{align}
%
It is guaranteed to converge to the solution of the Bellman equation, as formalized in the following theorem.
\begin{thm}\label{thm:converge}{\rm
The structural update rules~\eqref{eq:primal}-\eqref{eq:dual} converge to a solution to~\eqref{eq:q update} almost surely.
}\end{thm}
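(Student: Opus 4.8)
The plan is to recognize the updates~\eqref{eq:primal}--\eqref{eq:dual} as a stochastic approximation scheme and invoke a two-timescale / projected-ODE argument in the spirit of~\cite{abounadi2001learning} and standard primal-dual convergence theory. First I would make precise the underlying stochastic structure: the term $r(\phi,a)+\max_{a'}Q_k(\phi',a')-Q_k(\phi,a)-Q_k(\phi_f,a_f)$ that appears in~\eqref{eq:q learning} is an unbiased sample (given $\phi'$ is drawn from $\mathbb{P}(\cdot\mid\phi,a)$ and under the $\epsilon$-greedy exploration policy, which keeps every state-action pair sampled infinitely often) of the expected RHS, namely the map
\begin{equation*}
F_{\phi,a}(\bm Q)=r(\phi,a)+\sum_{\phi'}\mathbb{P}(\phi'\mid\phi,a)\max_{a'\in\mathbb{A}_{\phi'}}Q(\phi',a')-Q(\phi,a)-Q(\phi_f,a_f).
\end{equation*}
Thus I set $\nabla_{\bm Q}\mathcal{G}_{\phi,a}=F_{\phi,a}(\bm Q)$ as in the objective-transformation step, so that~\eqref{eq:primal} is a noisy primal ascent step on $\mathcal{G}$ penalized by the dual variables, and~\eqref{eq:dual} is the exact dual descent step on the constraint residual $\bm T\bm Q$.

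Next I would write the mean-field ODE associated with the coupled iteration,
\begin{equation*}
\dot{\bm Q}=\nabla_{\bm Q}\mathcal{G}(\bm Q)+\bm T^T\bm\nu,\qquad \dot{\bm\nu}=-\bm T\bm Q,
\end{equation*}
and argue that its globally asymptotically stable equilibrium set coincides with the set of primal-dual (KKT) pairs of the convex program~\eqref{eq:gradient new}. Here I would lean on: (i) convexity, established in the excerpt via $\nabla^2(-\mathcal{G})\succeq 0$ and linearity of the constraint, together with an appropriate constraint-qualification remark (the feasible cone $\{\bm T\bm Q\ge 0\}$ is nonempty — e.g. the true $Q$-factor lies in it by Lemmas~\ref{lemma:monototic q} and~\ref{lemma:super} — so strong duality holds); (ii) a Lyapunov function of the standard saddle-point type, $\mathcal{L}(\bm Q,\bm\nu)=\tfrac12\|\bm Q-\bm Q^*\|^2+\tfrac12\|\bm\nu-\bm\nu^*\|^2$, whose derivative along the ODE is nonpositive because the cross terms $\langle \bm T^T(\bm\nu-\bm\nu^*),\bm Q-\bm Q^*\rangle$ and $-\langle \bm T(\bm Q-\bm Q^*),\bm\nu-\bm\nu^*\rangle$ cancel, leaving $\langle\nabla\mathcal{G}(\bm Q)-\nabla\mathcal{G}(\bm Q^*),\bm Q-\bm Q^*\rangle\le 0$ by concavity of $\mathcal{G}$. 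Then I would verify the Borkar--Meyn stability prerequisites for the stochastic iteration — Lipschitz mean field, square-integrable martingale-difference noise (using Lemma~\ref{lemma:bound} to bound the reward and hence the iterates, giving an a priori $\ell_\infty$ bound on $\bm Q_k$ so the scaled ODE at infinity has the origin as its unique attractor), and step-size conditions $\sum\zeta_k=\infty$, $\sum\zeta_k^2<\infty$ — and conclude via the standard ODE method that $(\bm Q_k,\bm\nu_k)$ converges a.s. to an equilibrium, i.e. to a $\bm Q^*$ solving $\bm T\bm Q^*\ge 0$ with $\nabla\mathcal{G}(\bm Q^*)+\bm T^T\bm\nu^*=0$ and complementary slackness. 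Since the true $Q$-factor is feasible and is the unique maximizer of $\mathcal{G}$ (it zeroes the unconstrained gradient, which by the Bellman-equation characterization~\eqref{eq:q update} and the contraction/span-seminorm uniqueness from Theorem~\ref{thm:stationary} is the only such point up to the normalization fixed by subtracting $Q(\phi_f,a_f)$), the constraint is inactive at the optimum and $\bm Q^*$ solves~\eqref{eq:q update}.

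The main obstacle I anticipate is the asynchronous and noisy nature of the update together with the fact that $\mathcal{G}$ is not strongly concave (the max-operator makes $F$ merely nonexpansive-type, not a contraction in Euclidean norm), so the Lyapunov argument only yields a LaSalle-type convergence to the equilibrium \emph{set}, and one must rule out oscillation on that set; I would handle this by exploiting that the set of optimal $\bm Q$ is a singleton once the reference subtraction is imposed (again from Theorem~\ref{thm:stationary}), which forces the $\bm\nu$-component to also settle because $\bm T^T\bm\nu^*=-\nabla\mathcal{G}(\bm Q^*)$ pins $\bm\nu^*$ modulo $\ker\bm T^T$, and the dual dynamics restricted to that kernel are neutrally stable but bounded, so a.s. convergence of the pair still follows from the projected-ODE / differential-inclusion refinement of Kushner--Clark. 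A secondary technical point is verifying the a priori boundedness of $\bm\nu_k$: because the reward is only exponentially bounded (Lemma~\ref{lemma:bound}) rather than uniformly bounded, I would first argue boundedness of $\bm Q_k$ (truncating $\tau^{(i)}$ at $L$ as in Section~\ref{sec:perfect} makes the state space finite, so the reward is bounded on it), and then $\bm\nu_k$ boundedness follows from~\eqref{eq:dual} and the vanishing of $\bm T\bm Q_k$ along converging subsequences — a routine but necessary bookkeeping step that I would state rather than belabor.
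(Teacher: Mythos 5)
Your proposal follows the same overall route as the paper: both treat \eqref{eq:primal}--\eqref{eq:dual} as a stochastic approximation whose mean-field ODE is $\dot{\bm Q}=\mathcal{H}(\bm Q)+\bm T^T\bm\nu$, $\dot{\bm\nu}=-\bm T\bm Q$, delegate the stochastic part to the RVI $Q$-learning machinery of Abounadi et al., and establish stability of the coupled ODE via a quadratic Lyapunov function in which the skew-symmetric primal--dual coupling terms cancel. The execution differs in one substantive place: the paper takes $\mathcal{V}=\tfrac12\bm Q^T\bm Q+\tfrac12\bm\nu^T\bm\nu$ centered at the \emph{origin} and reduces everything to the claim that $\bm Q^T\mathcal{H}(\bm Q)<0$ away from the equilibrium $\hat{\bm Q}$, which it asserts as a consequence of the global asymptotic stability result in \cite[Theorem 3.4]{abounadi2001learning}; you instead center the Lyapunov function at the saddle point $(\bm Q^*,\bm\nu^*)$ and use monotonicity of the (negated) gradient, i.e.\ $\langle\nabla\mathcal{G}(\bm Q)-\nabla\mathcal{G}(\bm Q^*),\bm Q-\bm Q^*\rangle\le0$. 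Your version is the more standard saddle-point argument and is actually on firmer ground: global asymptotic stability at an equilibrium $\hat{\bm Q}\neq\bm 0$ does not by itself imply $\bm Q^T\mathcal{H}(\bm Q)<0$ everywhere (that would make the origin, not $\hat{\bm Q}$, the attractor of $\tfrac12\|\bm Q\|^2$), so the shift to the equilibrium-centered function is not cosmetic. You also spell out what the paper leaves implicit -- feasibility of the true $Q$-factor so that the constraints do not move the optimizer, verification of the Borkar--Meyn/step-size conditions, boundedness of $\bm\nu_k$, and the LaSalle-type issue that the derivative is only semidefinite. Two caveats on your side: exact cancellation of the cross terms leaves residuals $(\bm Q-\bm Q^*)^T\bm T^T\bm\nu^*-(\bm\nu-\bm\nu^*)^T\bm T\bm Q^*$ unless you run the full Lagrangian saddle-point inequality (which does close the gap), and the phrase ``the constraint is inactive at the optimum'' should be weakened to ``the unconstrained maximizer is feasible, so $\bm\nu^*=\bm 0$ is a valid multiplier'' -- some rows of $\bm T\bm Q^*$ may well be zero. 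Neither affects the conclusion.
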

\begin{pf}
See Appendix~\ref{apx:converge}.
\end{pf}
{
\floatname{algorithm}{Algorithm}
\begin{algorithm}[!htbp]
\caption{Structural Stochastic Approximation for Unknown Channel Model}
\label{alg:structural_q}

\begin{algorithmic}[1]
\State Initialize $\bm Q_0$ and $\bm\nu_0$ arbitrarily and set $k\gets 0$
\Repeat
    \State Observe current state $\phi_k$
    \State Select action $a_k$ using the $\epsilon$-greedy rule~\eqref{eq:e_greedy}
    \State Execute $a_k$, and observe $r(\phi_k,a_k)$ and $\phi_{k+1}$
    \For{each state-action pair $(\phi,a)$}
        \State Compute $\bm{Q}_{k+1}$ from $\bm{Q}_{k}$ and $\bm\nu_{k}$ via~\eqref{eq:primal}
    \EndFor
    \State Compute the dual variable $\bm \nu_{k+1}$ via~\eqref{eq:dual}
    \State $k\gets k+1$
\Until{convergence}
\State \Return $\theta(\phi)\gets \arg\max_{a\in\mathbb{A}_\phi}Q_k(\phi,a)$
\end{algorithmic}
\end{algorithm}}
{A summary of the procedure for computing $\bm Q_k$ and thus, the optimal policy is provided in Algorithm~\ref{alg:structural_q}.\footnote{The per-iteration complexity for the standard update is on the order of $\mathcal{O}(|\mathbb{A}|)$. Compared with the standard update, Algorithm~\ref{alg:structural_q} involves additional matrix multiplications $\bm{T}\bm{\nu}_k$ and $\bm{T}\bm{Q}_k$. Nevertheless, the sparsity pattern of $\bm{T}$ shown in~\eqref{eq:tm} ensures that the per-iteration overhead remains minimal. The primary benefit of Algorithm 2 is increased sample efficiency, which accelerates convergence by reducing the total number of required iterations.}}
\begin{rem}{\rm
Compared to the unconstrained optimization problem~\eqref{eq:gradient}, the problem \eqref{eq:gradient new} is now constrained. Intuitively, this means that the feasible domain is reduced at each time, making the corresponding learning scheme~\eqref{eq:primal}-\eqref{eq:dual} more efficient than~\eqref{eq:q learning}. A visual illustration of the relationship between the standard update~\eqref{eq:q learning} and our proposed structural update~\eqref{eq:primal}-\eqref{eq:dual} is provided in Fig.~\ref{fig:relation}.
}\end{rem}
\section{Simulations}\label{chap:simulation}
This section presents simulations to validate the theoretical results and highlight the efficacy of the proposed algorithms. We illustrate the threshold-based behavior of the optimal policy and demonstrate the superior performance of our proposed algorithm compared to heuristic algorithms under perfect channel information. Next, we extend the analysis to unknown channel models, where structural updates achieve faster convergence and higher rewards during training compared to standard methods. The results collectively confirm the monotonicity and superadditivity properties derived earlier while highlighting the efficacy of the proposed algorithms.
{
\subsection{Simulation Setup: Power Systems}\label{chap:sim setup}
In this section, we consider two practical power-network examples: a 4-bus system~\cite{grainger2018power} and a modified PJM 5-bus system~\cite{li2010small}. The linearized DC power flow measurement models are generated using MATPOWER~\cite{zimmerman2010matpower}, where $C_1$ and $C_2$ correspond to the linearized DC power flow Jacobians relating the bus phase-angle states to the power-flow measurements. Moreover, $Q_1$, $R_1$, $Q_2$ and $R_2$ are chosen as identity matrices.

We consider two cases for $A$ matrices. For Case 1, we adopt the commonly used random-walk model, where, $A_1 = I$ and $A_2=I$. For Case 2, leakage and damping effects are incorporated into the system dynamics. 

\begin{rem}{\rm
		It is worth noting that, in Case 2, $\|A_1\|= 0.9936$ and $\|A_2\|=0.9775$. By considering both cases, we cover both scenarios where $\|A_i\|\geq 1$ and $\|A_i\|< 1$.
	}\end{rem}

Unless otherwise specified, the variable $\tau^{(i)}$ is restricted to a finite set $\{0,\dots,20\}$, and the transmission power is normalized with $\mathcal{P}=\{0,1,2\}$. The channels gains $H_k^{(i)}$, and $G_k^{(i)}$ have two possible states: $s_1 = 0.04$ and $s_2 = 0.09$, with transition probabilities $t^s_{s_1,s_1}=t^s_{s_2,s_2}=0.8$ and $t^s_{s_1,s_2}=t^s_{s_2,s_1}=0.2$. For the modulation scheme, we employ QAM (Example~\ref{exam:qam}) with parameters $\alpha=0.75$, $b=0.8$ and $B=4$. The noise level is set to $\sigma_i=0.1$. For the energy harvester, we consider $b_{\max} = 3$ and $\mathcal{E}=\{0,1,2\}$, where the transition probability is provided below~\cite{leong2018transmission}: $t^e_{0,0}=t^e_{2,1}=0.2$, $t^e_{0,1}=t^e_{1,0}=t^e_{1,2}=0.3$, $t^e_{0,2}=0.5$, $t^e_{1,1}=0.4$, $t^e_{2,0}=0.1$, and $t^e_{2,2}=0.7$.

\subsection{Results for Perfect Channel Information}
1) {\it\bf Structural Results:} Fig.~\ref{fig:monotonic} presents the optimal policy at specific values of $(\tau^{(1)},\tau^{(2)})$ under conditions $b=2$ and $E=1$, for different channel realizations. We observe clear action transitions between a certain threshold and the action $a^{(i)}$ is non-decreasing in $\tau^{(i)}$, which confirms the structural results established in Theorem~\ref{thm:structure}.

\begin{figure}[!htbp]
	\centering
	\includegraphics[width=0.9\linewidth]{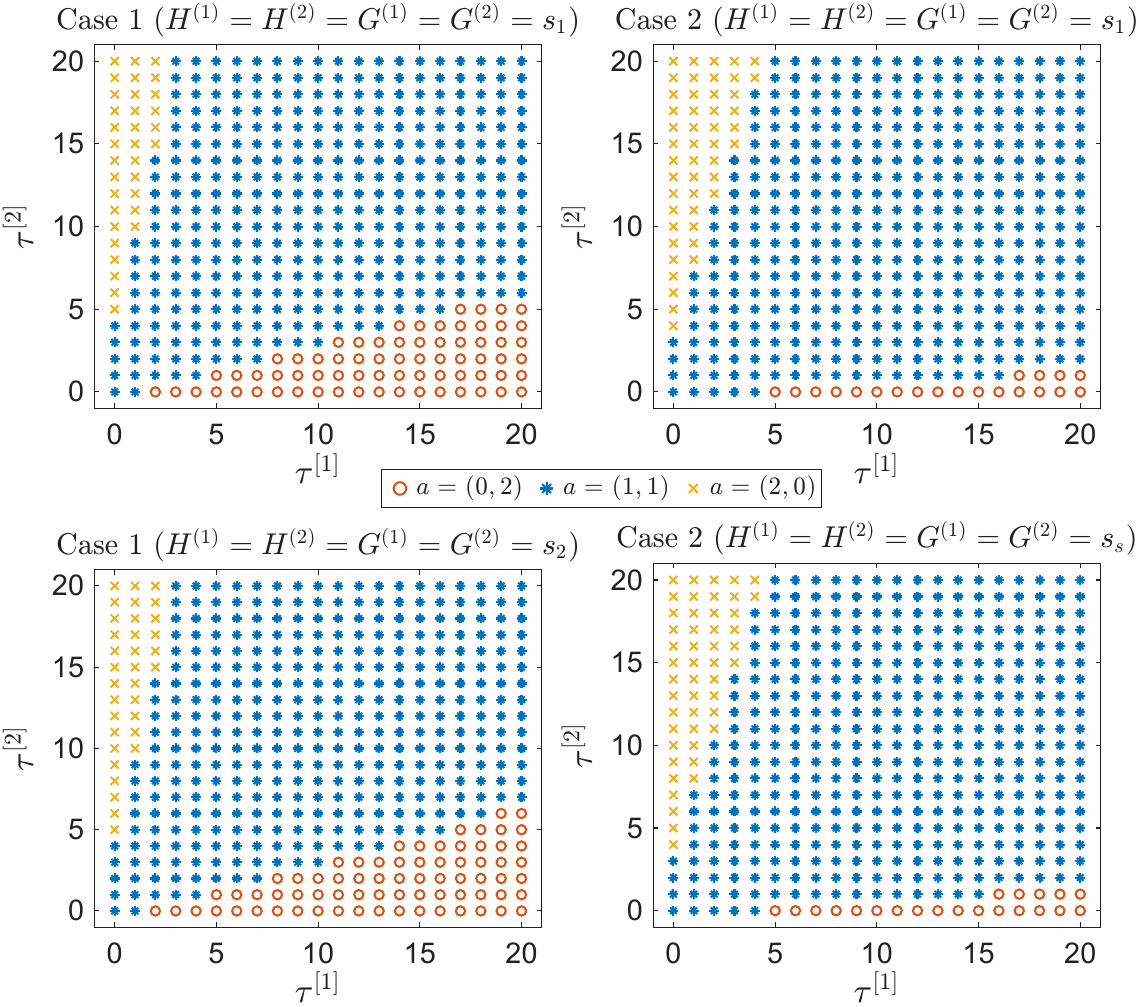}
	\caption{The optimal policy at $(\tau^{(1)},\tau^{(2)})$.}
	\label{fig:monotonic}
\end{figure}

\begin{figure*}[!t]
	\centering
	\includegraphics[width=0.8\linewidth]{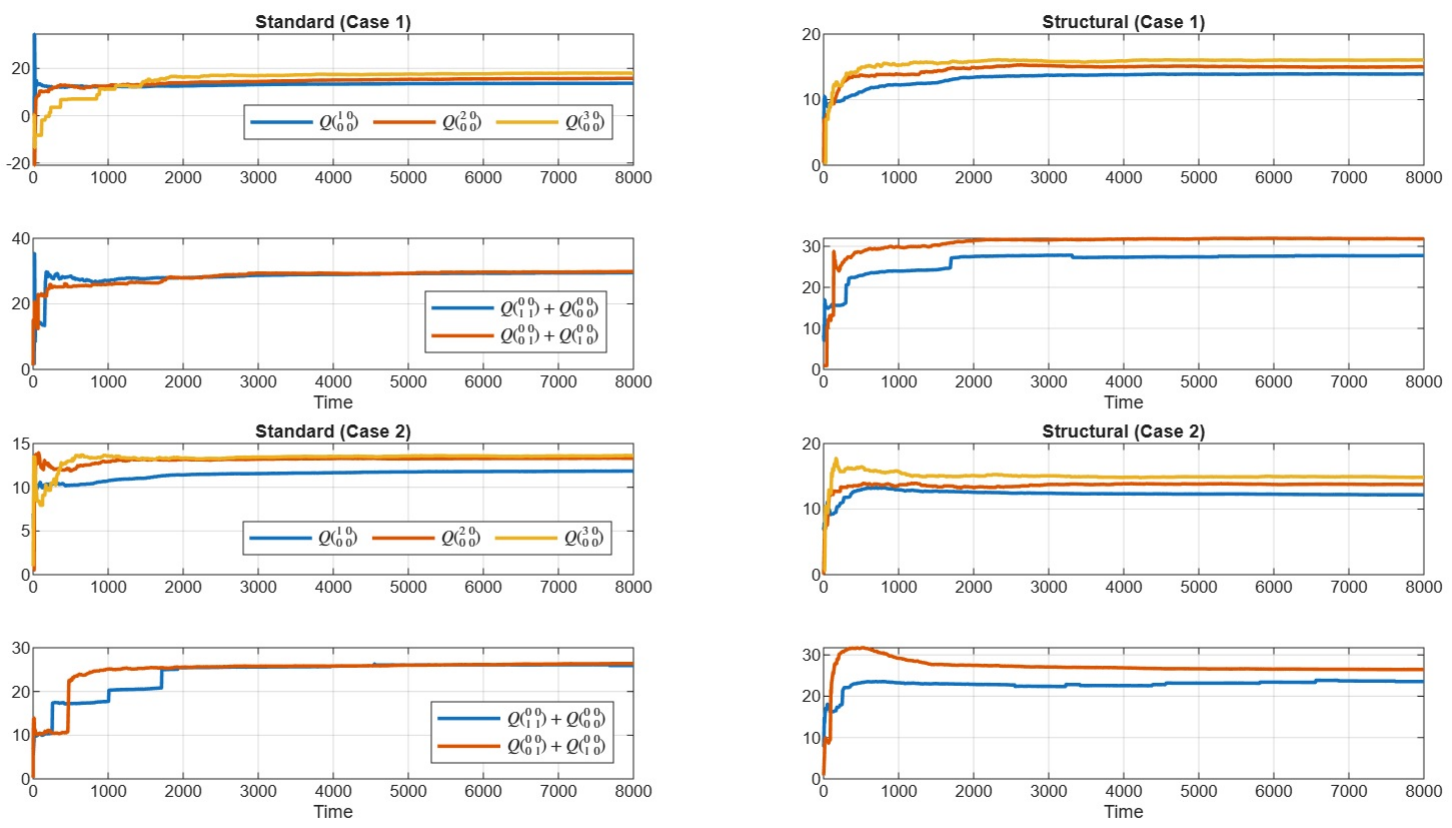}
	\caption{$Q$-factor in the learning process with standard and structural updates. Monotonicity and superadditivity are satisfied when the blue line is the lowest, the yellow line is the largest and the orange line lies in between.}
	\label{fig:Q}
\end{figure*}

\begin{figure}[!tbp]
	\centering
	\includegraphics[width=0.9\linewidth]{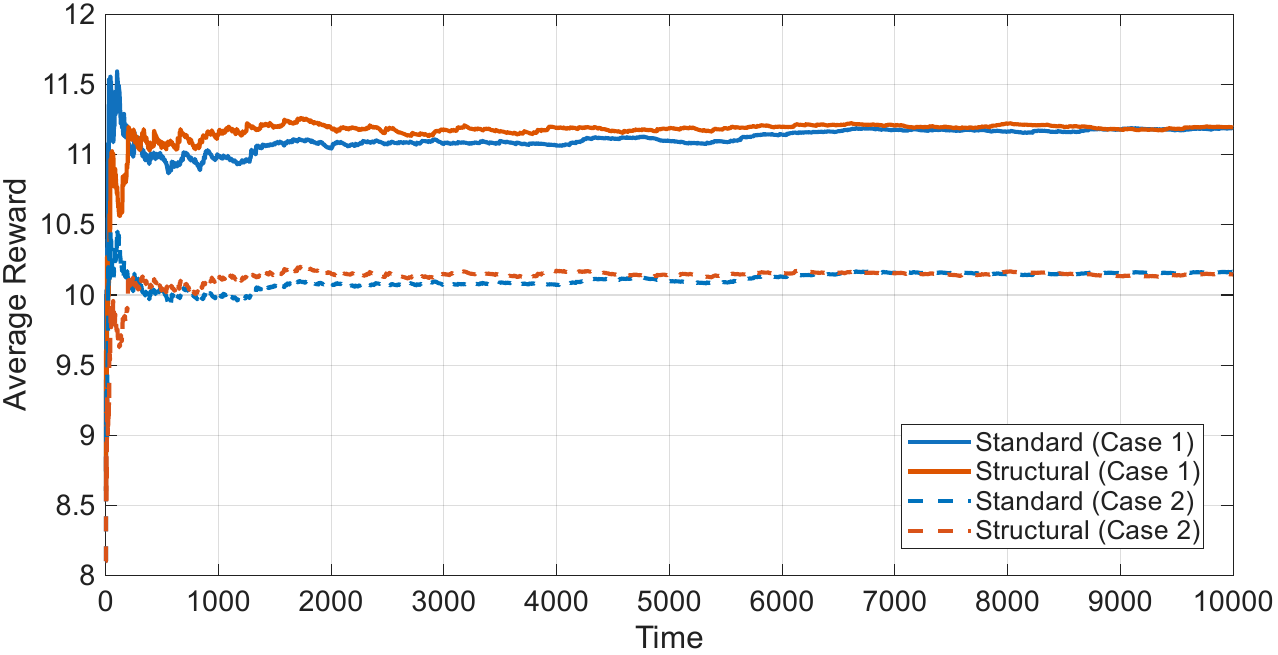}
	\caption{Average reward in the learning process with standard and structural updates.}
	\label{fig:reward}
\end{figure}

2) {\it\bf Policy Comparison}: To demonstrate the optimality of the relative value iteration algorithm, we compare its performance against two alternative strategies: a greedy algorithm and a random policy.  In the greedy algorithm, the attacker allocates the maximum available power to the channel with the largest $\tau^{(i)}$, then proceeds to the the-largest $\tau^{(i)}$, and so forth, until the battery is exhausted. The average rewards achieved by these strategies are presented in Table~\ref{tb:compare}. It is evident that relative value iteration achieves the highest reward. Moreover, Theorem~\ref{thm:structure} suggests that allocating more power to channels with larger $\tau^{(i)}$ is beneficial. Consistent with this insight, the greedy algorithm exhibits a substantial performance improvement compared with the random baseline.
\begin{table}[!htbp]
\centering
\caption{Average reward obtained by different policies.}
\label{tb:compare}

\begin{tabular}{cccc}
\hline
~& Value Iteration & Greedy & Random \\ \hline
Case 1&      11.38  &  10.93     & 9.60     \\
Case 2&      10.31  &  9.61     & 9.14     \\ \hline
\end{tabular}
\end{table}

\subsection{Results for Unknown Channel Model}
1) {\it\bf Structural Update}: To better visualize the learning process of the $Q$-factor, we restrict our focus to the case where the communication channel is fixed and $\tau^{(i)}\in\{0,\dots,7\}$. Fig.~\ref{fig:Q} illustrates the evolution of the $Q$-factor under standard~\eqref{eq:q learning} and structural~\eqref{eq:primal}-\eqref{eq:dual} updates. While both approaches converge, the structural version demonstrates faster convergence. Furthermore, Fig.~\ref{fig:Q} reveals that the standard updates do not consistently satisfy the monotonicity (Lemma~\ref{lemma:monototic q}) and superadditivity (Lemma~\ref{lemma:super}) properties. In contrast, the structural updates largely preserves these properties, which accounts for its improved convergence behavior.

2) {\it\bf Average Rewards}: Fig.~\ref{fig:reward} shows the average reward for the setup described in Section~\ref{chap:sim setup}. While the final average rewards of both algorithms converge to the same value, the structural updates yield higher rewards during the learning process compared with the standard one.}

\section{Conclusion}\label{chap:conclusion}
In this paper, we analyzed jamming attacks against remote state estimation. We aimed to develop algorithms to determine optimal jamming policies in the context of the ``harvest and jam'' scenario. Two cases were studied based on the attacker's knowledge of the system: (i) perfect channel knowledge and (ii) unknown channel model. We demonstrated that the proposed algorithms for both cases converge to the optimal policy asymptotically. Furthermore, the optimal policy was shown to exhibit certain structural properties that can be leveraged to expedite both algorithms. Simulation results validated the optimality of the obtained policies. Additionally, the accelerated version of the algorithms exhibited faster convergence and yielded higher average rewards during the learning process compared to the standard algorithm. Future work includes {considering energy harvesting from legitimate communications between sensors and the estimator} or exploring a game-theoretic framework where sensors are aware of the energy-harvesting attacker.
\bibliographystyle{plain}        
\bibliography{ref}           
\appendix
\section{Proof of Theorem~\ref{thm:stationary}}\label{apx:stationary}
Considering a policy $a_k=\theta(\phi_k)$, we can express its discounted reward as follows:
\begin{equation}\label{eq:discounted cost}
V_{\delta}(\theta,\phi)\triangleq\sum_{k=0}^{\infty}\delta^k\mathbb{E}[r(\phi_k,\theta(\phi_k))|\phi_0=\phi],
\end{equation}
where $\delta\in(0,1)$ is the discount factor. Denote $V_{\delta}(\phi)\triangleq\sup_{\theta\in{\Theta}}V_{\delta}(\theta,\phi)$.

According to to~\cite[Theorem 4.1, Remark 3.2, Remark 4.1(b)]{guo2006average}, only the following items need to be verified:
\begin{enumerate}[(I)]
\item There exist positive constants $\zeta$ and $\mu<1$ and a measurable function $\Psi(\cdot)\geq1$ on $\mathbb{S}$ such that for all $(\phi,a)\in\mathbb{S}\times\mathbb{A}$, we have $\sum_{\phi'\in\mathbb{S}}\Psi(\phi')\mathbb{P}(\phi'|\phi,a)\leq\mu \Psi(\phi)+\zeta$.
\item For all $(\phi,a)\in\mathbb{S}\times\mathbb{A}$, there exists a constant $M\geq0$ such that $|r(\phi,a)|\leq M\Psi(\phi)$.
\item The action space $\mathbb{A}$ is finite.
\item The state space $\mathbb{S}$ is countably infinite.
\item There exists a state $\phi_f\in\mathbb{S}$ and two functions $l(\cdot)$ and $u(\cdot)$ on $\mathbb{B}_{\Psi}(\mathbb{S})$ such that $-l(\phi)\leq V_{\delta}(\phi)-V_{\delta}(\phi_f)\leq u(\phi)$, for all $\phi\in\mathbb{S}$ and $\delta\in(0,1)$, where $\mathbb{B}_{\Psi}(\cdot)$ is the Banach space given $\Psi(\cdot)$.
\end{enumerate}
We choose 
\begin{equation}\label{eq:measure}
\Psi(\phi)		=\sum_{i=1}^N\Psi(\tau^{(i)}), \quad \Psi(\tau^{(i)})	=\|A_i\|^{2\tau^{(i)}}.
\end{equation}
Since $\|A_i\|\geq1$, $\Psi(\phi)\geq1$ for any $\phi\in\mathbb{S}$. Then when the state $\phi$ transitions to $\phi'$, the function $\Psi(\phi'^{(i)})$ is updated according to the following rule: $\Psi(\tau'^{(i)})=1$ if $\tau'^{(i)}=0$ and $\Psi(\tau'^{(i)})=\|A_i\|^2\Psi(\tau^{(i)})$ if $\tau'^{(i)}=\tau^{(i)}+1$.
\begin{figure}[!htbp]
	\centering
	\includegraphics[width=\linewidth]{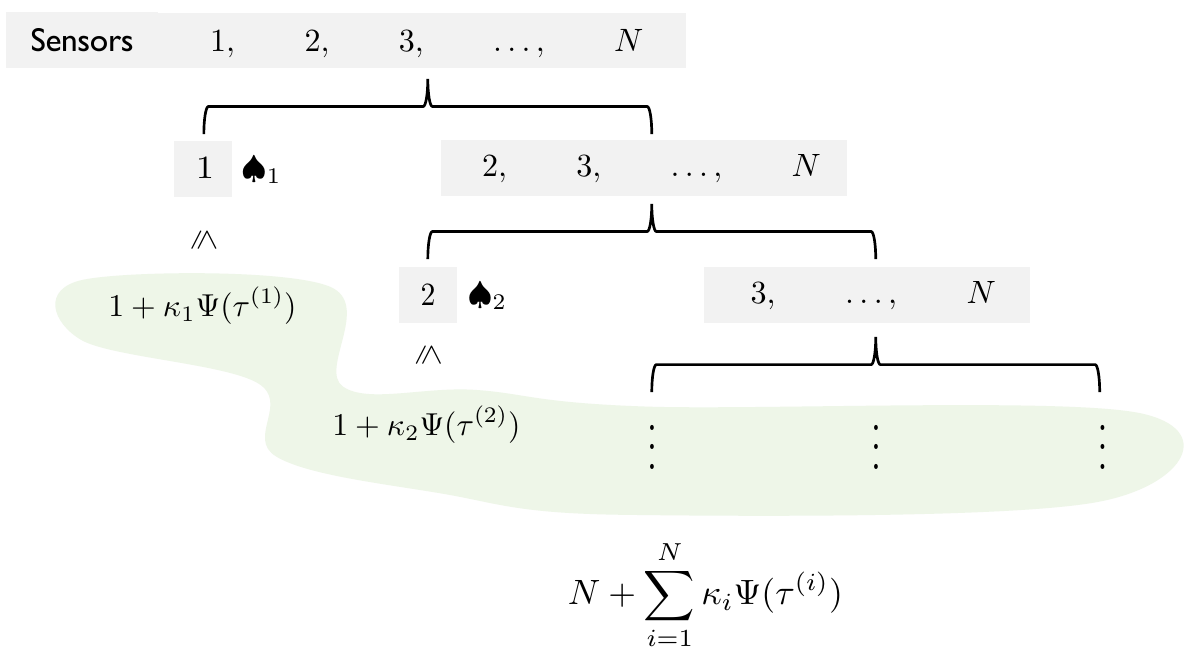}
	\caption{Illustration of the derivation process of~\eqref{eq:derivation}.}
	\label{fig:proof}
\end{figure}

For ease of exposition, we focus on the case with $N=2$ sensors without loss of generality. We calculate the value of $\sum_{\phi'\in\mathbb{S}}\Psi(\phi')\mathbb{P}(\phi'|\phi,a)$ by fixing $\Lambda_{2}$ while varying $\Lambda_{1}$. Through manipulations and relaxations (Fig.~\ref{fig:derivation} and Fig.~\ref{fig:proof}), the calculation is reduced to the case where only the $2$-nd sensor is present. Repeating this process for each sensor in the general case of $N$ sensors, we obtain
\begin{equation}\label{eq:derivation}
\sum_{\phi'\in\mathbb{S}}\Psi(\phi')\mathbb{P}(\phi'|\phi,a)\leq+\max_i{\kappa_i}\Psi(\phi).
\end{equation}
Let $\mu=\max_i{\kappa_i}<1$ and $\zeta=N$, (I) is proved.
\begin{figure*}[!t]
	\begin{tcolorbox}[enhanced jigsaw,
		colback=gray!5,
		colframe=gray,
		width=\linewidth,%
		arc=2mm, auto outer arc,
		boxrule=1.5pt,
		drop shadow={gray}
		]
		We begin with the following expression:
		\begin{equation*}
			\begin{aligned}
				&\sum_{\phi'\in\mathbb{S}}\Psi(\phi')\mathbb{P}(\phi'|\phi,a)\\
				&=\sum_{\Lambda_{2}'}\underbrace{\left[\sum_{\Lambda_1'}t^s_{F^{(1)},F'^{(1)}}t^s_{G^{(1)},G'^{(1)}}(\lambda^{(1)})^{\gamma'^{(1)}}(1-\lambda^{(1)})^{1-\gamma'^{(1)}}\Psi(\phi')\right]}_{\spadesuit_1} t^s_{F^{(2)},F'^{(2)}}t^s_{G^{(2)},G'^{(2)}}(\lambda^{(2)})^{\gamma'^{(2)}}(1-\lambda^{(2)})^{1-\gamma'^{(2)}},
			\end{aligned}
		\end{equation*}
		where
		\begin{equation*}
				\spadesuit_1=
				\sum_{\Xi'_1}t_{F^{(1)},F'^{(1)}}t_{G^{(1)},G'^{(1)}}\left[\lambda^{(1)}\left(1+\Psi(\tau'^{(2)})\right)+(1-\lambda^{(1)})\left(\|A_1\|^2\Psi(\tau^{(1)})+\Psi(\tau'^{(2)})\right)\right].
		\end{equation*}
		Using the property $\lambda^{(i)}\leq 1$ and Assumption~\ref{assumption:stable}, we can bound ${\spadesuit_1}$ as follows:
		\begin{equation*}
			\begin{aligned}
				\spadesuit_1&\leq1+\sum_{\Xi_1}t_{F^{(1)},F'^{(1)}}t_{G^{(1)},G'^{(1)}}(1-\lambda^{(1)})\|A_1\|^2\Psi(\tau^{(1)})+\Psi(\tau'^{(2)})\leq1+\kappa_1\Psi(\tau^{(1)})+\Psi(\tau'^{(2)}),
			\end{aligned}
		\end{equation*}
		
		where $\kappa_1$ is defined in Assumption~\ref{assumption:stable}.
		
		Therefore, substituting ${\spadesuit_1}$ back into the original equation for $\sum_{\phi'\in\mathbb{S}}\Psi(\phi')\mathbb{P}(\phi'|\phi,a)$, we have
		\begin{equation*}
			\begin{aligned}
				\sum_{\phi'\in\mathbb{S}}\Psi(\phi')\mathbb{P}(\phi'|\phi,a)&\leq1+\kappa_1\Psi(\tau^{(1)})+\underbrace{\sum_{\Lambda_2'}t_{F^{(2)},F'^{(2)}}t_{G^{(2)},G'^{(2)}}(\lambda^{(2)})^{\gamma'^{(2)}}(1-\lambda^{(2)})^{1-\gamma'^{(2)}}\Psi(\tau'^{(2)})}_{\spadesuit_2}\\
				&\leq2+\kappa_1\Psi(\tau^{(1)})+\kappa_2\Psi(\tau^{(2)}).
			\end{aligned}	
		\end{equation*}
	\end{tcolorbox}
	\caption{Detailed derivation process of~\eqref{eq:derivation} with 2 sensors.}
	\label{fig:derivation}
\end{figure*}

Let $\varphi$ be a constant such that $\bar{P}_i\preceq\varphi I$ and $W_i\preceq\varphi I$. Define a function $\bar{h}_i(X)\triangleq A_iXA_i^T+\varphi I$. Then $h^{\tau^{(i)}}_i(\bar{P}_i)\preceq \bar{h}^{\tau^{(i)}}_i(\varphi I)\preceq\varphi\sum_{j=0}^{\tau^{(i)}}A_i^j(A_i^T)^j$.
Therefore,
\begin{equation*}
\begin{aligned}
&\frac{\tr[h^{\tau^{(i)}+1}_i(\bar{P}_i)]}{\varphi}	\leq n_i+n_i\|A_i\|^2+\cdots+n_i\|A_i\|^{2\tau^{(i)}+2}\\
						&=n_i\frac{\|A_i\|^{2\tau^{(i)}+4}-1}{\|A_i\|^2-1}\leq\frac{n_i\|A_i\|^4}{\|A_i\|^2-1}\Psi(\tau^{(i)}).
\end{aligned}
\end{equation*}
Therefore, we obtain ${\tr[h^{\tau^{(i)}+1}_i(\bar{P}_i)]}\leq M\Psi(\tau^{(i)})$, where $M=\varphi\max_i\frac{ n_i\|A_i\|^4}{\|A_i\|^2-1}$.
Then, (II) is proved since:
\begin{equation*}
\begin{aligned}
0&\leq r(\phi,a)	=\sum_{i=1}^N\left\{\lambda^{(i)}\tr[\bar{P}_{i}]+(1-\lambda^{(i)})\tr[h_i^{\tau^{(i)}+1}(\bar{P}_{i})]\right\}\\
			&\leq\sum_{i=1}^N\tr[h_i^{\tau^{(i)}+1}(\bar{P}_{i})]\leq\sum_{i=1}^N M\Psi(\tau^{(i)})=M\Psi(\phi).
\end{aligned}
\end{equation*}
According to~\cite[Lemma 3.2]{guo2006average}, for any $\delta\in(0,1)$ and state $\phi\in\mathbb{S}$, the following holds:
\begin{equation*}
V_{\delta}(\phi)\leq \frac{M\Psi(\phi)}{1-\delta}+\frac{M\zeta}{(1-\mu)(1-\delta)}.
\end{equation*}
Let $\phi_f=(b, E, \Lambda_f^{(1)},\dots,\Lambda_f^{(N)})$, where $\Lambda_f^{(i)}=(H^{(i)},G^{(i)},\allowbreak 0)$. Since $V_{\delta}(\phi)\geq0$, we obtain
\begin{equation*}
\begin{aligned}
&V_{\delta}(\phi)-V_{\delta}(\phi_f)	\geq0-V_{\delta}(\phi_f)\\&	\geq  -\frac{M\Psi(\phi_f)}{1-\delta}-\frac{M\zeta}{(1-\mu)(1-\delta)}=-\frac{(N-\mu N+\zeta)M}{(1-\mu)(1-\zeta)}.
\end{aligned}
\end{equation*}
Let $\underline{k}=\inf_{k\geq0}\{k|\phi_k=\phi_f,\phi_0=\phi\}$. Then
\begin{align*}
&V_{\delta}(\phi)	=\mathbb{E}\left[\sum_{k=0}^{\underline{k}}\delta^kr(\phi_k,\theta^*(\phi_k))|\phi_0=\phi\right]\\
				&\quad+\mathbb{E}\left[\sum_{k=\underline{k}}^{\infty}\delta^{k}r(\phi_k,\theta^*(\phi_k))|\phi_0=\phi,\phi_{\underline{k}}=\phi_f\right]\\
				&=\mathbb{E}\left[\sum_{k=0}^{\underline{k}}\delta^kr(\phi_k,\theta^*(\phi_k))|\phi_0=\phi\right]+\mathbb{E}[\delta^{\underline{k}}|\phi_0=\phi]V_{\delta}(\phi_f)\\
				&\leq\mathbb{E}\left[\sum_{k=0}^{\underline{k}}\delta^kr(\phi_k,\theta^*(\phi_k))|\phi_0=\phi\right]+V_{\delta}(\phi_f).
\end{align*}
By Lemma~\ref{lemma:bound}, $\mathbb{E}[\sum_{k=0}^{\underline{k}}\delta^kr(\phi_k,\theta^*(\phi_k))|\phi_0=\phi]$ is bounded. Therefore, by letting $l(\phi)=(N-\mu N+\zeta)M/[(1-\mu)(1-\zeta)]$ and $u(\phi)	=\mathbb{E}[\sum_{k=0}^{\underline{k}}\delta^kr(\phi_k,\theta^*(\phi_k))|\allowbreak\phi_0=\phi]$, we complete the proof of (V), and thus the proof of Theorem~\ref{thm:stationary}.
\section{Proof of Theorem~\ref{thm:structure}}\label{apx:structure}
Using Lemma~\ref{lemma:super}, we conclude that if $ Q (\tau^{(i)}_-,a_{+,i})\geq Q (\tau^{(i)}_-,a_{-,i})$, then $ Q (\tau^{(i)}_+,a_{+,i})\geq Q (\tau^{(i)}_+,a_{-,i})$. This result implies that if, under $\tau^{(i)}_-$, action $a_{+,i}$ yields a higher reward than action $a_{-,i}$, then the same holds true for $\tau^{(i)}_+$. This is exactly what we aim to prove.

{
\section{Proof of Theorem~\ref{thm:converge}}\label{apx:converge}
The proof of Theorem~\ref{thm:converge} follows a similar trajectory to the convergence proof of relative value iteration (RVI) $Q$-learning presented by Abounadi~\cite{abounadi2001learning}. The key difference is that we must additionally establish the asymptotic stability of the structural update rules~\eqref{eq:primal}-\eqref{eq:dual}.

Define $\hat{\mathcal{H}}_{\phi,a}(\bm Q)\triangleq r(\phi,a)-Q(\phi,a)-Q(\phi_f,a_f)+\max_{a'\in\mathbb{A}_{\phi'}}Q(\phi',a')$, and $\mathcal{H}_{\phi,a}(\bm Q)\triangleq r(\phi,a)-Q(\phi,a)-Q(\phi_f,a_f)+\sum_{\tilde{\phi}\in\mathbb{S}}\mathbb{P}(\tilde{\phi}|\phi,a)\max_{\tilde{a}\in\mathbb{A}_{\tilde{\phi}}}Q(\tilde{\phi},\tilde{a})$.

As discussed in Section~\ref{sec:std_update}, each update replaces the expectation in $\mathcal{H}_{\phi,a}$ with our noisy observation in $\hat{\mathcal{H}}_{\phi,a}$. Then,~\eqref{eq:q learning} be written as the stochastic approximation
\begin{equation}\label{eq:stochastic approx}
\begin{aligned}
Q_{k+1}(\phi,a)=Q_{k}(\phi,a)&+\xi_{k}[\mathcal{H}_{\phi,a}(\bm Q_k)+M_{k+1}],
\end{aligned}
\end{equation}
where $\mathcal{H}_{\phi,a}(\bm Q_k)$ corresponds to the ideal update with known transition probabilities, and $M_{k+1}$ is the resulting noise term by our sample update.

Note  that~\eqref{eq:stochastic approx} corresponds to a limiting ODE $\dot{\bm Q}=\mathcal{H}(\bm Q)$. By~\cite[Theorem 3.4]{abounadi2001learning}, the RVI Q-learning updates converge almost surely to a solution to~\eqref{eq:q update} since:
\begin{enumerate}[(I)]
\item The noise sequence $\{M_k\}$ is a square-integrable martingale difference sequence
\item The limiting ODE $\dot{\bm Q}=\mathcal{H}(\bm Q)$ has a unique asymptotically stable equilibrium.
\end{enumerate}

For the structural updates, (I) holds directly since the observation process remains unchanged. It therefore remains to verify the stability of the associated ODE.

In particular, for the update rules~\eqref{eq:primal}-\eqref{eq:dual}, the corresponding ODE is given by
\begin{equation}\label{eq:ode}
\dot{\bm Q}	=\mathcal{\bm H}(\bm Q)+\bm{T}^T\bm{\nu},\qquad \dot{\bm{\nu}}	=\max\{\bm\nu-\bm{TQ},0\}-\bm\nu.
\end{equation}
Let $\hat{\bm Q}$ denote the equilibrium of $\dot{\bm Q}=\mathcal{H}(\bm Q)$, so that $\mathcal{H}(\hat{\bm Q})=0$ and $\bm{T} \hat{\bm Q}\geq 0$. To establish the convergence of~\eqref{eq:ode}, we construct the Lyapunov function $\mathcal{V}(\bm{Q},\bm{\nu})\triangleq\frac{1}{2}(\bm{Q}-\hat{\bm{Q}})^T(\bm{Q}-\hat{\bm{Q}})+\frac{1}{2}\bm{\nu}^T\bm{\nu}$. Using $\mathcal{H}(\hat{\bm Q})=0$, $\bm{T} \hat{\bm Q}\geq 0$ and $\bm\nu\geq0$, we obtain
\begin{equation}\label{eq:v dot}
\begin{aligned}
\dot{\mathcal{V}}	&=(\bm{Q}-\hat{\bm{Q}})^T(\mathcal{\bm H}(\bm Q)+\bm{T}^T\bm{\nu})-\bm{\nu}^T\bm{TQ}\\
				&=(\bm{Q}-\hat{\bm{Q}})^T\mathcal{\bm H}(\bm Q)-\bm \nu^T \bm T \hat{\bm Q}\\
				&\leq(\bm{Q}-\hat{\bm{Q}})^T[\mathcal{\bm H}(\bm Q)-\mathcal{\bm H}(\hat{\bm Q})].
\end{aligned}
\end{equation}
Moreover, it is easy to verify that the Bellman operator $\mathcal{H}(\bm Q)+\bm Q$ is non-expansive in the $\ell_2$ norm\footnote{More precisely, the non-expansivity of the Bellman operator is established with respect to a weighted $\ell_2$ norm, i.e., $\|\cdot\|_{\bm\Lambda}$ where $\bm\Lambda$ corresponds to the stationary distribution. In this case, the rigorous Lyapunov function takes the form $\mathcal{V}(\bm{Q},\bm{\nu})=\frac{1}{2}(\bm{Q}-\hat{\bm{Q}})^T\bm \Lambda(\bm{Q}-\hat{\bm{Q}})+\frac{1}{2}\bm{\nu}^T\bm{\nu}$. For notational brevity, we proceed using the standard $\ell_2$ norm, as the convergence arguments remain qualitatively identical.}, i.e., $(\bm{Q}-\hat{\bm{Q}})^T[\mathcal{H}(\bm Q)+\bm Q-\mathcal{H}(\hat{\bm Q})-\hat{\bm Q}]\leq \|\bm{Q}-\hat{\bm{Q}}\|_2^2$.

Combining this with the previous inequality~\eqref{eq:v dot} yields $\dot{\mathcal V} \leq 0$, with equality if and only if $(\bm Q, \bm \nu)=(\hat{\bm Q},\bm 0)$. Therefore, from LaSalle’s invariance principle, \eqref{eq:ode} is globally asymptotically stable, and thus~\eqref{eq:primal}-\eqref{eq:dual} converge to the solution to~\eqref{eq:q update} almost surely. 
}
\section{Proof of Lemma~\ref{lemma:monototic}}\label{apx:monototic}
Recall the discounted reward $V_{\delta}(\phi)$ defined in~\eqref{eq:discounted cost}. Similar to the average-cost problem, the optimal policy for the discounted cost problem satisfies the Bellman equation:
\begin{equation}\label{eq:bellman discount}
V_{\delta}(\phi)=\max_{a\in\mathbb{A}_\phi}\{r(\phi,a)+\delta\sum_{\phi'\in\mathbb{S}}\mathbb{P}(\phi'|\phi,a)V_{\delta}(\phi')\}.
\end{equation}
Define the mapping $\mathcal{T}_{\delta}(\cdot)$ as the Bellman operator on $\hat{V}_{\delta}(\phi)$, $\phi\in\mathbb{S}$ as
\begin{equation}\label{eq:map}
\mathcal{T}_{\delta}\hat{V}_{\delta}(\phi)=\max_{a\in\mathbb{A}_\phi}\{r(\phi,a)+\delta\sum_{\phi'\in\mathbb{S}}\mathbb{P}(\phi'|\phi,a)\hat{V}_{\delta}(\phi')\}.
\end{equation}
In what follows, we first demonstrate that $\mathcal{T}_{\delta}(\cdot)$ is a contraction mapping. We then apply the induction-based method to prove that $V_{\delta}(\phi)$ is monotonic, and subsequently show that $V(\phi)$ is also monotonic.
\begin{defn}[Weighted supremum norm]{\rm
Let $\Psi(\cdot)$ be an arbitrary positive real-valued function on $\mathbb{S}$. The weighted supremum norm $\|\cdot\|_\Psi$ for real-valued functions $V(\cdot)$ on $\mathbb{S}$, is defined as $\|V\|_\Psi=\sup_{\phi\in\mathbb{S}}|V(\phi)|/\Psi(\phi)$.
}\end{defn}

\begin{defn}[Contraction mapping]{\rm
The mapping $\mathcal{T}_{\delta}(\cdot)$ is a contraction mapping on $\mathbb{B}_\Psi(\mathbb{S})$ with {respect} to $\Psi(\cdot)$, if for all $\hat{V}_{\delta}, \hat{V}_{\delta}'\in\mathbb{B}_\Psi(\mathbb{S})$, the inequality $\|\mathcal{T}_{\delta}\hat{V}_{\delta}-\mathcal{T}_{\delta}\hat{V}_{\delta}'\|_\Psi\leq\vartheta \|\hat{V}_{\delta}-\hat{V}_{\delta}'\|_\Psi$ holds, where $0\leq\vartheta<1$.
}\end{defn}
\begin{lem}{\rm
If Assumption~\ref{assumption:stable} holds, then the Bellman operator $\mathcal{T}_{\delta}(\cdot)$ in~\eqref{eq:map} is a contraction mapping with respect to the function $\Psi(\cdot)$ as defined in~\eqref{eq:measure}.
}\end{lem}
\begin{pf}
By~\cite[Proposition 8.3.9]{hernandez2012further}, it suffices to verify that a set of assumptions, i.e., \cite[Assumptions 8.3.1, 8.3.2, and 8.3.3]{hernandez2012further}, are satisfied. Assumptions 8.3.1 and 8.3.3 impose conditions on the compactness of the action space and the continuity of the reward functions in actions, both of which hold naturally in our setup. Furthermore, by \cite[Remark 8.3.5]{hernandez2012further}, Assumption 8.3.2 can be replaced by conditions (I)–(II) provided in Appendix~\ref{apx:stationary}, which have already been verified. Hence, $\mathcal{T}_{\delta}(\cdot)$ is a contraction mapping.
\end{pf}
Furthermore, with (I)-(III) in Appendix~\ref{apx:stationary}, a stationary policy satisfying~\eqref{eq:bellman discount} exists~\cite[Lemma 3.2]{guo2006average}. Thus, we are able to prove the monotonicity of $V_{\delta}(\phi)$ by induction.

Consider a function $\hat{V}_{\delta}(\cdot)$. Assume Lemma~\ref{lemma:monototic} holds for $\hat{V}_{\delta}(\cdot)$, i.e., $\hat{V}_{\delta}(\phi_+)\geq\hat{V}_{\delta}(\phi_-)$ for all $\phi_+$ and $\phi_-$. Let $a^*_-$ denote the action that maximizes the right-hand side of~\eqref{eq:map} for $\phi_-$, i.e., $a_-^*=\arg\max_{a\in\mathbb{A}_{\phi_-}}\{r(\phi_-,a)+\delta\sum_{\phi'\in\mathbb{S}}\mathbb{P}(\phi_-'|\phi_-,a)\hat{V}_{\delta}(\phi_-')\}$. Then, by assumption:
\begin{equation*}
\sum_{\phi'_+\in\mathbb{S}}\mathbb{P}(\phi'_+|\phi_+,a_-^*)\hat{V}_{\delta}(\phi_+')\geq \sum_{\phi'_-\in\mathbb{S}}\mathbb{P}(\phi'_-|\phi_-,a_-^*)\hat{V}_{\delta}(\phi_-').
\end{equation*}
Additionally, since $r(\phi_+,a^*_-)\geq  r(\phi_-,a^*_-)$, we obtain
\begin{equation*}
\begin{aligned}
&\mathcal{T}_{\delta}\hat{V}_{\delta}(\phi_+)	\geq r(\phi_+,a^*_-)+\delta\sum_{\phi'_+\in\mathbb{S}}\mathbb{P}(\phi'_+|\phi_+,a_-^*)\hat{V}_{\delta}(\phi_+')\\
				\geq& r(\phi_-,a^*_-)+\delta\sum_{\phi'_-\in\mathbb{S}}\mathbb{P}(\phi'_-|\phi_-,a_-^*)\hat{V}_{\delta}(\phi_-')=\mathcal{T}_{\delta}\hat{V}_{\delta}(\phi_-).
\end{aligned}
\end{equation*}
Therefore, Lemma~\ref{lemma:monototic} still holds after the Bellman operator $\mathcal{T}_{\delta}(\cdot)$. By Banach fixed-point theorem, since the mapping is contractive, it has a unique fixed point. By induction, the fixed point $V_{\delta}(\cdot)$ satisfies $V_{\delta}(\phi_+)\geq V_{\delta}(\phi_-)$ for all $\phi_+$ and $\phi_-$.

We note that the sequence $\{V_{\delta(n)}(\phi)-V_{\delta(n)}(\phi_f)\}$ is equicontinuous, where $\{\delta(n)\}$ is any sequence of increasing discount factors such that $\delta(n)\to1$ as $n\to\infty$. Furthermore, since conditions (I)-(V) in Appendix~\ref{apx:stationary} are satisfied, it follows from~\cite[Remark 4.1(b)]{guo2006average} that, $V(\phi)=\lim_{n\to\infty}\{V_{\delta(n)}(\phi)-V_{\delta(n)}(\phi_f)\}$. As a result, $V(\phi)$ is also monotonically increasing in $\tau^{(i)}$. This completes the proof.

\section{Proof of Lemma~\ref{lemma:super}}\label{apx:additive}

For notation simplicity, we denote the packet arrival rate of the $i$-th sensor under action $a$ as $f_a^{(i)}\triangleq\lambda^{(i)}=f(\frac{H^{(i)}}{a^{(i)}G^{(i)}+\sigma_i^2})$. We let $\hat{\phi}_+=(b,E,\hat{\Lambda}_+^{(1)},\dots,\hat{\Lambda}_+^{(N)})$ and $\hat{\phi}_-=(b,E,\hat{\Lambda}_-^{(1)},\allowbreak\dots,\hat{\Lambda}_-^{(N)})$, where for $j=1,\dots,N$, $\hat{\Lambda}^{(j)}_+	=(H^{(j)},G^{(j)},\allowbreak\tau^{(j)}_+)$ and $\hat{\Lambda}^{(j)}_- =(H^{(j)},G^{(j)},\tau^{(j)}_-)$. Here, $b$, $E$, $H^{(j)}$ and $G^{(j)}$ are some arbitrary values. Additionally, for the specific index $i$, $\tau^{(i)}_-\leq  \tau^{(i)}_+$, and for $j\neq i$, $\tau^{(j)}_-=  \tau^{(j)}_+=\tau^{(j)}$. 

Then for the $i$-th system, since $f_{a_{-,i}}^{(i)}{\geq f_{a_{+,i}}^{(i)}}$ and $0\leq \tr[h^{\tau^{(i)}_-+1}(\bar{P}_{i})]\leq \tr[h^{\tau^{(i)}_++1}(\bar{P}_{i})]$, we have $(f_{a_{-,i}}^{(i)}- f_{a_{+,i}}^{(i)})\tr[h^{\tau^{(i)}_++1}(\bar{P}_{i})]\geq(f_{a_{-,i}}^{(i)}- f_{a_{+,i}}^{(i)})\tr[h^{\tau^{(i)}_-+1}(\bar{P}_{i})]$.

For $j\neq i$, $(f_{a_{-,i}}^{(j)}- f_{a_{+,i}}^{(j)})\tr[h^{\tau^{(j)}_++1}(\bar{P}_{j})]=(f_{a_{-,i}}^{(j)}- f_{a_{+,i}}^{(j)})\tr[h^{\tau^{(j)}_-+1}(\bar{P}_{j})]=(f_{a_{-,i}}^{(j)}- f_{a_{+,i}}^{(j)})\tr[h^{\tau^{(j)}_-+1}(\bar{P}_{j})]$.

On the other hand, applying {Lemma}~\ref{lemma:monototic} and following a similar trend as above , we have $\mathbb{E}[V(\phi')|\hat{\phi}_+,a_+]-\mathbb{E}[V({\phi}')|\hat{\phi}_+,a_-]\geq\mathbb{E}[V(\phi')|\hat{\phi}_-,a_+]-\mathbb{E}[V(\phi')|\hat{\phi}_-,a_-]$, where $\mathbb{E}[V(\phi')|\phi,a]=\sum_{\phi'\in\mathbb{S}}\mathbb{P}(\phi'|\phi,a)V(\phi')$.

Combining the inequalities above, we obtain{\small
\begin{equation*}
\begin{aligned}
	& Q (\tau^{(i)}_+,a_{+,i})- Q (\tau^{(i)}_+,a_{-,i})\\
=	& \sum_{j=1}^N (f_{a_{+,i}}^{(j)}-f_{a_{-,i}}^{(j)})\tr[\bar{P}_{j}]+(f_{a_{-,i}}^{(j)}- f_{a_{+,i}}^{(j)})\tr[h^{\tau^{(j)}_++1}(\bar{P}_{j})]\\
	&\quad+	\mathbb{E}[V(\phi')|\hat{\phi}_+,a_+]-\mathbb{E}[V(\phi')|\hat{\phi}_+,a_-]\\
\geq	&\sum_{j=1}^N (f_{a_{+,i}}^{(j)}-f_{a_{-,i}}^{(j)})\tr[\bar{P}_{j}]+(f_{a_{-,i}}^{(j)}- f_{a_{+,i}}^{(j)})\tr[h^{\tau^{(j)}_-+1}(\bar{P}_{j})]\\
	&\quad+\mathbb{E}[V(\phi')|\hat{\phi}_-,a_+]-\mathbb{E}[V(\phi')|\hat{\phi}_-,a_-]\\
=	& Q (\tau^{(i)}_-,a_{+,i})- Q (\tau^{(i)}_-,a_{-,i}),
\end{aligned}
\end{equation*}}
which completes the proof.

\end{document}